\newtheorem{definition}{Definition}
\newtheorem{lemma}[definition]{Lemma}
\newtheorem{theorem}[definition]{Theorem}
\newtheorem{corollary}[definition]{Corollary}
\newtheorem{conjecture}[definition]{Conjecture}
\def\squareforqed{\hbox{\rlap{$\sqcap$}$\sqcup$}}
\def\qed{\ifmmode\squareforqed\else{\unskip\nobreak\hfil
\penalty50\hskip1em\null\nobreak\hfil\squareforqed
\parfillskip=0pt\finalhyphendemerits=0\endgraf}\fi}
\def\endenv{\ifmmode\;\else{\unskip\nobreak\hfil
\penalty50\hskip1em\null\nobreak\hfil\;
\parfillskip=0pt\finalhyphendemerits=0\endgraf}\fi}
\newlength{\blank}
\newenvironment{proof}[1][{\hspace{-\blank}}]{{\noindent\textbf{Proof   }}}{\hfill\qed\vskip 0.5\baselineskip}
\mathchardef\ordinarycolon\mathcode`\:
\def\vcentcolon{\mathrel{\mathop\ordinarycolon}}
\newcommand{\nc}{\newcommand}
\nc{\rnc}{\renewcommand}
\nc{\binomial}[2]{\genfrac{(}{)}{0pt}{}{#1}{#2}}
\nc{\lbar}[1]{\overline{#1}}
\nc{\bra}[1]{\langle#1|}
\nc{\ket}[1]{|#1\rangle}
\nc{\ketbra}[2]{|#1\rangle\!\langle#2|}
\nc{\braket}[2]{\langle#1|#2\rangle}
\nc{\rank}{\operatorname{rank}\,}
\nc{\tr}{\operatorname{Tr}}
\nc{\cA}{{\cal A}}
\nc{\cB}{{\cal B}}
\nc{\cC}{{\cal C}}
\nc{\cD}{{\cal D}}
\nc{\cE}{{\cal E}}
\nc{\cF}{{\cal F}}
\nc{\cG}{{\cal G}}
\nc{\cH}{{\cal H}}
\nc{\cI}{{\cal I}}
\nc{\cJ}{{\cal J}}
\nc{\cK}{{\cal K}}
\nc{\cL}{{\cal L}}
\nc{ \cM}{{\cal M}}
\nc{\cN}{{\cal N}}
\nc{\cO}{{\cal O}}
\nc{\cP}{{\cal P}}
\nc{\cR}{{\cal R}}
\nc{\cS}{{\cal S}}
\nc{\cT}{{\cal T}}
\nc{\cU}{{\cal U}}
\nc{\cX}{{\cal X}}
\nc{\cZ}{{\cal Z}}
\def\d{\delta}
\def\e{\epsilon}
\def\m{\mu}
\def\r{\rho}
\def\s{\sigma}
\nc{\rR}{{{\mathbb R}}}
\nc{\CC}{{{\mathbb C}}}
\nc{\FF}{{{\mathbb F}}}
\nc{\NN}{{{\mathbb N}}}
\nc{\ZZ}{{{\mathbb Z}}}
\nc{\PP}{{{\mathbb P}}}
\nc{\QQ}{{{\mathbb Q}}}
\nc{\UU}{{{\mathbb U}}}
\nc{\EE}{{{\mathbb E}}}
\nc{\id}{{\operatorname{id}}}
\nc{\be}{\begin{equation}}
\nc{\ee}{\end{equation}}
\nc{\bea}{\begin{eqnarray}}
\nc{\eea}{\end{eqnarray}}
\nc{\LO}{\text{LO}}
\nc{\LOCC}{\text{LOCC}}
\nc{\cLOCC}{{\overline{\text{LOCC}}}}
\nc{\rEP}{\text{SEP}}
\nc{\PPT}{\text{PPT}}
\nc{\rep}{\text{sep}}
\nc{\twist}{\text{twist}}
\nc{\te}{\otimes}
\nc{\pro}[1]{\ket{#1}\!\bra{#1}}
\nc{\mc}{\mathcal}
\begin{document}
 \singlespacing
\title{Every entangled state provides an advantage in classical communication}
\author{Stefan B{\"a}uml}
\email{s.m.g.bauml-1@tudelft.nl}
\affiliation{NTT Basic Research Laboratories, NTT Corporation, 3-1 Morinosato-Wakamiya, Atsugi, Kanagawa 243-0198, Japan}
\affiliation{NTT Research Center for Theoretical Quantum Physics, NTT Corporation, 3-1 Morinosato-Wakamiya, Atsugi, Kanagawa 243-0198, Japan}
\affiliation{QuTech, Delft University of Technology, Lorentzweg 1, 2628 CJ Delft, Netherlands}
\author{Andreas Winter}
\email{andreas.winter@uab.cat}
\affiliation{Departament de F\'{i}sica, Grup d'Informaci\'{o} Qu\`{a}ntica, Universitat Aut\`{o}noma de Barcelona, ES-08193 Bellaterra (Barcelona), Spain}
\affiliation{ICREA - Instituci\'{o} Catalana de Recerca i Estudis Avan\c{c}ats, ES-08010 Barcelona, Spain}
\author{Dong Yang}
\email{dyang@cjlu.edu.cn}
\affiliation{Department of Informatics, University of Bergen, 5020 Bergen, Norway}
\affiliation{Laboratory for Quantum Information, China Jiliang University, Hangzhou, Zhejiang 310018, China}

\begin{abstract}
We investigate the use of noisy entanglement as a resource in classical communication via a quantum channel. In particular, we are interested in the question whether for \emph{any} entangled state, including bound entangled states, there exists a quantum channel the classical capacity of which can be increased by providing the state as an additional resource. We partially answer this question by showing, for any entangled state, the existence of a quantum memory channel the feedback-assisted classical capacity with product encodings of which can be increased by using the state as a resource. Using a different (memoryless) channel construction, we also provide a sufficient entropic condition for an advantage in classical communication (without feedback and for general encodings) and thus provide an example of a state that is not distillable by means of one-way local operations and classical communication (LOCC), but can provide an advantage in the classical capacity of a number of quantum channels. As separable states cannot provide an advantage in classical communication, our condition also provides an entropic entanglement witness.
\end{abstract}

\date{\today}
\maketitle

\section{Introduction}
Since the early days of quantum information theory it is known that maximally entangled states can increase the rate of classical communication via a noiseless quantum channel. By means of a fundamental protocol known as \emph{superdense coding} \cite{bennett1992communication} it is possible to send two classical bits in a single use of a noiseless qubit channel, assisted by a maximally entangled state: The sender, Alice, can encode her bits by transforming the maximally entangled state into any of the four Bell basis states by means of local unitary operations. The channel is then used to send Alice's part of the Bell state to Bob, who can then distinguish between the four basis states by means of a Bell state measurement. Despite being a very elegant protocol, this version of entanglement assisted classical communication is set in an idealised scenario, where both the quantum channel and the assisting entanglement are noiseless. A more realistic and generally applicable scenario would use general noisy channels and mixed state entanglement. 

One direction of research, which has been pursued, is to go beyond noiseless channels and study the classical communication capacity of general quantum channels, while still requiring pure entanglement assistance. In a setting allowing for an arbitrary amount of maximal entanglement assistance, a capacity theorem has been derived \cite{bennett1999entanglement}. It states that the entanglement assisted classical capacity of a quantum channel is equal to the maximal \emph{mutual information} obtainable between Alice and Bob by sending part of a pure state over the given quantum channel. 

Research has also been conducted in another direction: Instead of pure, maximally entangled states the authors of \cite{bose2000mixed,hiroshima2001optimal,bowen2001classical,horodecki2001classical,winter2002scalable,horodecki2012quantum} have investigated the advantage given by mixed states, but in the setting of classical communication via a noiseless quantum channel. In \cite{bowen2001classical}, it has been shown that the advantage a mixed entangled state $\r_{AB}$ can provide in dense coding is determined by its \emph{coherent information}, defined by $I(A\rangle B)_\r=S(B)_\r-S(AB)_\r$. If $I(A\rangle B)_\r\le0$ the state $\r$ cannot provide an advantage. The setting assumed is restricted to unitary encodings that are independent for each channel use and one copy of $\r_{AB}$ per channel use. In \cite{horodecki2001classical,horodecki2012quantum}, the setting has been generalised to arbitrary encodings. It has been shown, however, that even in this generalised scenario the advantage which $\r_{AB}$ can provide is determined by the maximal coherent information obtainable from $\r_{AB}$ by means of local operations on Alice's side.
The case where an arbitrary number of copies of $\r_{AB}$ is available per channel use, as in \cite{bennett1999entanglement}, has also been considered \cite{horodecki2012quantum}. In this case the coherent information needs to be regularised.

As the coherent information plays an important role in determining the advantage entanglement can provide in dense coding, we believe it will be instructive to review another operational meaning of this quantity. It has been shown \cite{DevetakWinter-hash} that the coherent information $I(A\rangle B)_\r$ provides a lower bound on the asymptotic rate at which $\r_{AB}$ can be distilled to maximal entanglement. Hence any state that is not one-way distillable will not be of use in classical communication via a noiseless channel \cite{horodecki2012quantum}. An example of a one-way undistillable state is the two-qubit Werner state 
\be\label{eq:Werner}
\frac{q}{3}P_\text{sym}+(1-q)P_\text{anti},
\ee
with $q\ge 1/4$. Here $P_\text{sym}$ and $P_\text{anti}$ denote the projectors onto the symmetric and antisymmetric subspace, respectively. Going further, there exists a large class of states which cannot be distilled, even if two-way communication is available. Such states are known as \emph{bound entangled}  \cite{H97,Horodecki:1998kf}. The resource character of bound entangled states for various information theoretic tasks is still an active field of research. For example, it has been shown that any bound entangled state can be activated to improve the teleportation fidelity as long as a seed of distillable entanglement is available \cite{horodecki1999bound,M06}. Another task where any bound entangled state can provide an advantage is channel discrimination \cite{piani2009all}. Further, bound entanglement has been shown to be potentially useful for quantum key distribution \cite{HHHO052,HHHO05} as well as a source for Bell non-locality \cite{vertesi2014disproving}, while being only of limited use in quantum repeaters \cite{bauml2015limitations}.

In the present paper we combine the two research directions mentioned above. We investigate the scenario in which both the quantum channel and the assisting entanglement are noisy. This doubly noisy scenario has also been considered in \cite{zhuang2016additive}. Apart from being the most realistic scenario experimentally, the doubly noisy scenario also poses an interesting open question concerning the resource character of states that are not one-way distillable, in particular bound entangled states. Namely, we are interested whether for any entangled state there exists some quantum channel such that the state can provide an advantage in classical communication via the channel. For states with positive coherent information, this question has already been answered above; such states can yield an advantage in classical communication via the noiseless channel. However, for entangled states with vanishing or negative coherent information, in particular states which are only two-way distillable and bound entangled states, it is not known whether such channels exist. Hence, what we are looking for is a kind of activation effect, where a combination of a given noisy entangled state and a particular noisy channel provide a higher rate of classical communication than the channel alone, whereas the state combined with a noiseless channel cannot provide an advantage. Let us also note that shared randomness as well separable states, which can be simulated by shared randomness, cannot provide an advantage in classical communication \cite{prevedel2011entanglement}.

The results we present in this work partially answer the question posed above, i.e. whether for any entangled state there exists a channel the classical capacity of which can be increased by the state. In a first approach, we consider channels with finite memory, and communication schemes that allow for feedback (noiseless classical backward communication) after each channel use. We show that for each entangled state, there exists a corresponding memory channel, the feedback-assisted classical capacity with product encodings of which can be increased by using the state as an additional resource.

The second approach we present in this paper is set in the usual framework of many independent channel uses (i.e. a memoryless channel), and without feedback. Given a given state $\r_{\tilde A\tilde B}$ and a channel $\mc{M}:\tilde A\to B$, we provide an entropic condition which is sufficient for the existence of another channel $\mc{N}:\tilde AD\to B$, which can be constructed from $\mc{M}$, such that the Holevo capacity of $\mc{N}$ can be increased if we use $\r_{\tilde A\tilde B}$ as an additional resource. The condition is of the form
\be\label{eq:EntCond}
S(B|\tilde B)_\omega<S_{\text{min}}(\mc{M}),
\ee
where $\omega_{B\tilde B}=\mc{M}\te \id\left(\r_{\tilde A\tilde B}\right)$ and $S_{\text{min}}$ denotes the minimum output entropy. The channel $\mc{N}$ is constructed from $\mc{M}$ in a way which was introduced in \cite{shor2004equivalence} in the context of equivalences of various additivity questions. The same construction has also been used in \cite{zhu2017superadditivity,zhu2018superadditivity}. As only entangled states can provide an advantage in the Holevo capacity, the condition (\ref{eq:EntCond}) with a suitable choice of $\mc{M}$ can also serve as an entropic entanglement witness for the state $\r_{\tilde A\tilde B}$. If, in addition to fulfilling condition (\ref{eq:EntCond}), $\mc{M}$ is entanglement breaking, $\mc{N}$ will be entanglement breaking as well, hence the Holevo quantity will equal the capacity. Using condition (\ref{eq:EntCond}) we can show that the two-qubit Werner state (\ref{eq:Werner}) can provide an advantage in the capacity of a number of channels for $q\le0.345$, i.e. for values where it is not one-way distillable, hence useless for dense 
coding \cite{horodecki2012quantum}. 

The organisation of this paper is as follows: In section \ref{sec:feedback} we present our first approach using backward communication. Section \ref{sec:Shor} contains our second approach, containing the entropic condition. In the final section we summarise our work and discuss open questions and future directions. Part of this work, in particular section \ref{sec:Shor}, are part of SB's Ph.D. thesis \cite{bauml2015applications}.

\section{Channels and resources for communication}
In this section, we introduce the communication scenario in which we will identify an advantage for every entangled state. Namely, we will review the plain, unassisted classical capacity (subsection \ref{II-A}) and the classical capacity assisted by a pre-shared entangled state (subsection \ref{II-B}), both for memoryless stationary ("i.i.d.") channels. As we cannot, as of yet, prove general separations in this setting, we extend the allowed coding schemes to (classical) feedback between the successive channel uses (subsections \ref{II-C} and \ref{II-D}), and finally consider channels with memory (subsection \ref{II-E}).


\subsection{Unassisted classical capacity}\label{II-A}
We begin by shortly reviewing some concepts of unassisted classical communication via quantum channels. For an in-depth treatment of these topics see \cite{MarkBook}. Assume that Alice intends to send a classical message $m\in M$ to Bob via $n$ uses of a quantum channel $\mc{N}:A\to B$. To to so, she prepares a state $\sigma_{A^n}^m$ and inputs each subsystem into the channel. Bob receives $\mc{N}^{\te n}\left(\sigma_{A^n}^m\right)$, on which he performs a POVM $\left\{E_m\right\}$, element $E_m$ corresponding to message $m$. The rate of communication is defined as 
\be
R=\frac{1}{n}\log|M|,
\ee 
and the error probability is given by
\be
P_e=1-\frac{1}{|M|}\sum_m \tr E_m \mc{N}^{\te n}\left(\sigma_{A^n}^m\right).
\ee
A rate of communication $R$ is said to be achievable if, for all $\e>0$ and large enough $n$, there exists a coding scheme with $P_e\le\e$. Now we can define the \textit{classical capacity of a quantum channel} as
\be\label{eq:classCap}
\cC(\mc{N})=\sup\left\{R:R\text{ is an achievable rate}\right\}.
\ee 
By the Holevo-Schuhmacher-Westmoreland (HSW) theorem \cite{1996quant.ph.11023H,PhysRevA.56.131}, it holds
\be
\cC(\mc{N})=\chi^{\infty}(\mc{N})=\lim_{n\to\infty}\frac{1}{n}\chi\left(\mc{N}^{\te n}\right),
\ee 
where $\chi(\mc{N})$ is the Holveo capacity of the channel $\mc{N}$, defined by
\be\label{def:Holevo1}
\chi(\mc{N})=\max_{\{p_x,\ket{\psi_x}\}} I(X:B)_\sigma,
\ee 
where
\be
\sigma_{XB}=\sum_x p_x\pro{x}_X\te \mc{N}\left(\pro{\psi_x}\right)_B
\ee
is known as the cq-state. $X$ denotes the random variable corresponding to the codewords $\ket{\psi_x}$ which Alice enters into the channel. An equivalent definition is given by
\be\label{def:Holevo2}
\chi(\mc{N})=\max_{\{p_x,\ket{\psi_x}\}}\left[S\left(\mc{N}\left(\sum_xp_x\pro{\psi_x}\right)\right)-\sum_xp_xS\left(\mc{N}\left(\pro{\psi_x}\right)\right)\right].
\ee
A rate of $\chi(\mc{N})$ can be achieved by Alice choosing a product encoding 
\be
\sigma_{A^n}^m=\pro{\psi_{x_1(m)}}_{A_1}\te...\te\pro{\psi_{x_n(m)}}_{A_n},
\ee
for some codebook $\left\{\textbf{x}(m)\right\}_{m\in M}$, agreed by Alice and Bob.

\subsection{Noisy entanglement assisted capacity}\label{II-B}
Assume now that, in addition to the $n$ channel uses, Alice and Bob share $n$ copies of an entangled state $\r_{\tilde A \tilde B}$. Alice can encode her message $m\in M$ by applying an encoding map and sending the state via $n$ uses of the channel $\mc{N}:A\to B$. Bob can then apply a collective measurement to the $\tilde B^nB^n$ subsystem. This scenario is demonstrated in Figure \ref{fig:EntAs}. The maximal achievable rate we call \textit{$\rho$-assisted classical capacity}. 
\be\label{eq:AssClassCap}
\cC_{\rho} (\mc{N})=\sup\left\{R:R\text{ is an achievable rate using }\rho_{\tilde A\tilde B}\right\},
\ee
where $R=\frac{\log\left| M\right|}{n}$ is the rate of communication. For separable $\rho$, eq. (\ref{eq:AssClassCap}) reduces to the unassisted classical capacity \cite{prevedel2011entanglement}.

Let us also define a quantity analogous to the Holevo capacity, eq. (\ref{def:Holevo1}), but including assistance by a state $\rho_{\tilde A\tilde B}$. Instead of Alice's possible input ensembles, the maximisation is performed over ensembles of encoding maps $\Lambda_x^{\tilde A\to A}$, which Alice can apply to her share of $\rho_{\tilde A\tilde B}$ before using the channel. Namely, we define
\be
\chi_{\rho} (\mc{N})=\max_{\s^{cq}} I(X:B\tilde B)_{\s^{cq}},
\ee
with cq state 
\be
\s^{cq}_{XCB}=\sum_xp_x\pro{x}_X\te\mc{N}^{A\to B}\circ\Lambda_x^{\tilde A\to A}(\rho_{\tilde A\tilde B}).
\ee
By the achievability part of the HSW theorem \cite{1996quant.ph.11023H,PhysRevA.56.131}, it holds $\cC_{\rho} (\mc{N})\ge \chi_{\rho} (\mc{N})$. 

\begin{figure}
	\centering
	\includegraphics[width=0.4\textwidth]{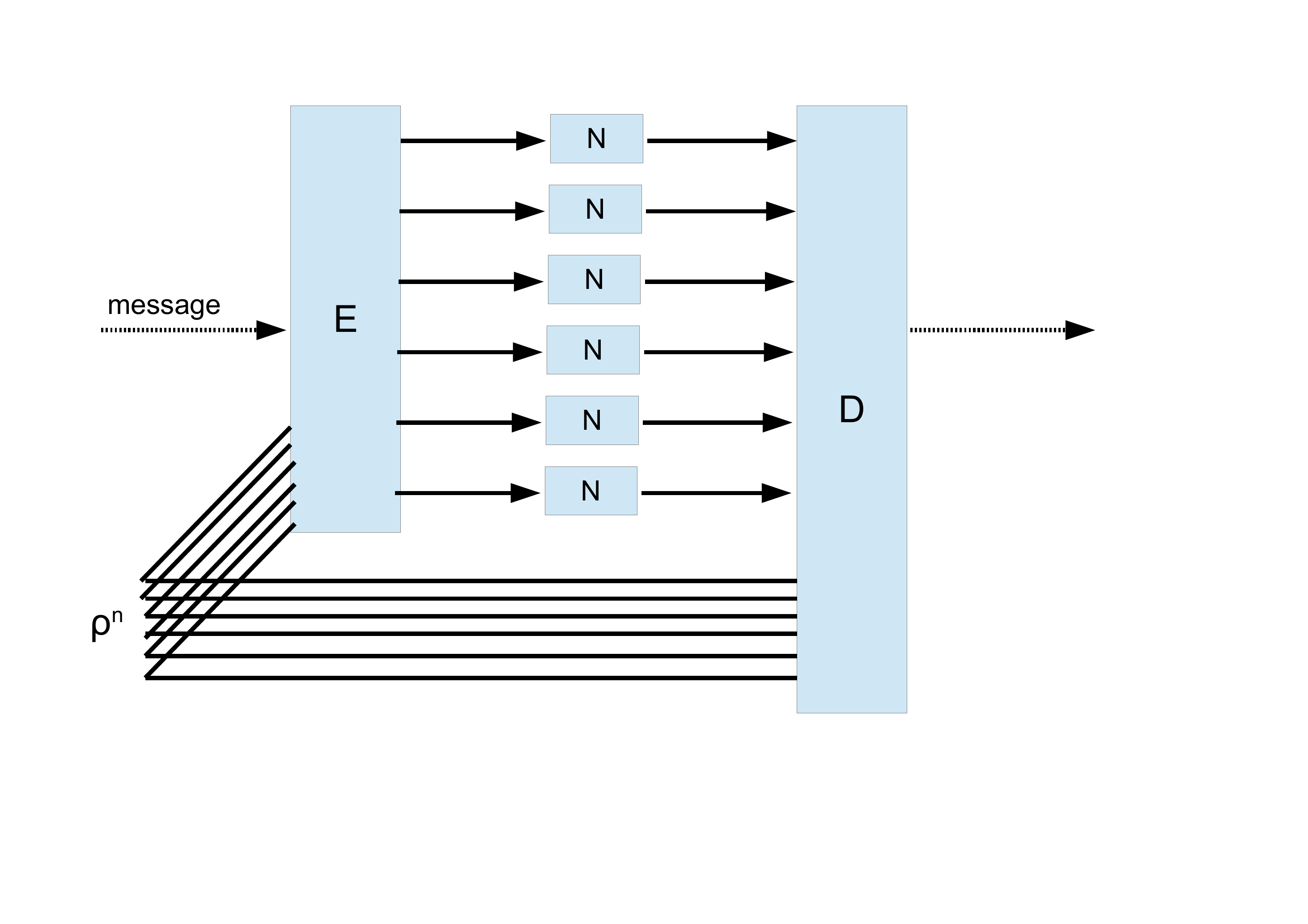}
		\caption{Our scenario: Alice and Bob have $n$ copies of $\rho$ and can use the channel $n$ times.}\label{fig:EntAs}
\end{figure}

\subsection{Feedback-assisted capacity}\label{II-C}
In this scenario we assume that Bob can send classical information back to Alice for free, allowing for the following strategy: Upon receiving the first output of the channel $\mc{N}^{A\to B}$, Bob can perform a quantum instrument, i.e. a set of completely positive maps $\{\cE^{(1)}_{j_1}\}$, with $\cE^{(1)}_{j_1}:B_1\to B'_1$, such that $\sum_{j_1}\cE^{(1)}_{j_1}$ is trace preserving, and send the outcome $j_1$ back to Alice as a classical message, while keeping the quantum output state of the instrument. After receiving the second channel output Bob can apply quantum instrument $\{\cE^{(2)}_{j_2}\}$ jointly on the output state of the channel and the output of his previous instrument $\{\cE^{(1)}_{j_1}\}$, i.e. $\cE^{(2)}_{j_2}:B'_1B_2\to B'_2$. The classical outcome $j_2$ is then sent back to Alice, while the output state is kept by Bob for the next round, and so on. After receiving the last output, Bob can perform his usual a decoding operation to guess Alice's message.

Alice can use the feedback messages $j_1, j_2,...,j_{n-1}$ which she receives from Bob to choose her input states into the channel $\mc{N}^{A\to B}$ accordingly. In this work we restrict ourselves to product encodings, i.e. the input states, w.r.t. a priori distribution $\{p_x\}$, are products of the form
\be\label{eq:productEncoding}
\omega_{A_1}^{(1,x)}\te\omega_{A_2}^{(2,j_1x)}\te\omega_{A_3}^{(3,j_2x)}\te\cdots\te\omega_{A_n}^{(n,j_{n-1}x)},
\ee 
where $A_i$ is the $i$-th input system. See also figure \ref{fig:FB}. The restriction to product encodings is a realistic assumption if Alice does not posses a quantum memory with coherence time long enough to wait for the classical feedback. Note that we could, in principle, allow Alice to use the feedback she receives in a given round to modify the encoding not only for the next, but for all following rounds, thus creating classical correlations between the input systems. However, since we allow for arbitrary classical feedback, such a strategy can be simulated by Bob storing all outcomes and including all previous outcomes $j_1,\cdots,j_{i-1}$ into each feedback message $j_i$. This can be easily achieved by choosing Bobs instruments to have additional classical in- and output registers. Let us now define the the \emph{feedback-assisted classical product capacity} as
\be\label{eq:feedMemCap}
\cC_{\te}^{\leftarrow}(\mc{N})=\sup\left\{R:R\text{ achievable rate with product encodings and free feedback}\right\}.
\ee 
Because of correlations that can be created by the feedback operations we cannot, in general, use the normal Holevo capacity (\ref{def:Holevo1}). However, we can recursively define a Holevo-like quantity that depends on the number of channel uses. Namely, for $n$ channel uses and $n-1$ feedbacks, we define 
\be
\chi^{(n)}(\mc{N})=\max I(X:B'_{n-1}B_n)_{\s^{(n)}},
\ee
where the maximisation is over a priori distributions $\{p_x\}_x$, instruments $\left\{\cE^{(1)}_{j_1}\right\}_{j_1},\cdots,\left\{\cE^{(n-1)}_{j_{n-1}}\right\}_{j_{n-1}}$ and sets of product input states $\left\{\omega_{A_1}^{(1,x)}\te\omega_{A_2}^{(2,j_1x)}\te\cdots\te\omega_{A_n}^{(n,j_{n-1}x)}\right\}_{x,j_1,\cdots,j_{n-1}}$ and the cq-state $\s^{(n)}$ is given by
\be
\s^{(n)}_{XB'_{n-1}B_n}=\sum_xp_x\pro{x}_X\te\s^{(n,x)}_{B'_{n-1}B_n},
\ee
where $\s^{(n,x)}$ is defined recursively by (choosing $\text{dim}(B'_0)=1$)
\be
\s^{(1,x)}_{B_1}=\mc{N}(\omega_{A_1}^{(1,x)})
\ee
and for $n>1$,
\be
\s^{(n,x)}_{B'_{n-1}B_n}=\sum_{j_{n-1}}\cE^{(n-1)}_{j_{n-1}}\left(\s^{(n-1,x)}_{B'_{n-2}B_{n-1}}\right)\te\mc{N}^{A_n\to B_n}\left(\omega_{A_n}^{(n,j_{n-1}x)}\right).
\ee
In the case of \emph{memoryless} quantum channels it has been shown \cite{bowen2004feedback,bowen2005feedback} that 
\be
\chi^{(n)}(\mc{N})\leq n\chi(\mc{N}),
\ee
showing that feedback cannot improve the classical capacity with product encodings. In the case of channels with memory (see subsection \ref{II-E}), however, it is not known whether feedback can provide an advantage.

\begin{figure}
	\includegraphics[width=0.45\linewidth, trim={2cm 6cm 2cm 2cm},clip=true]{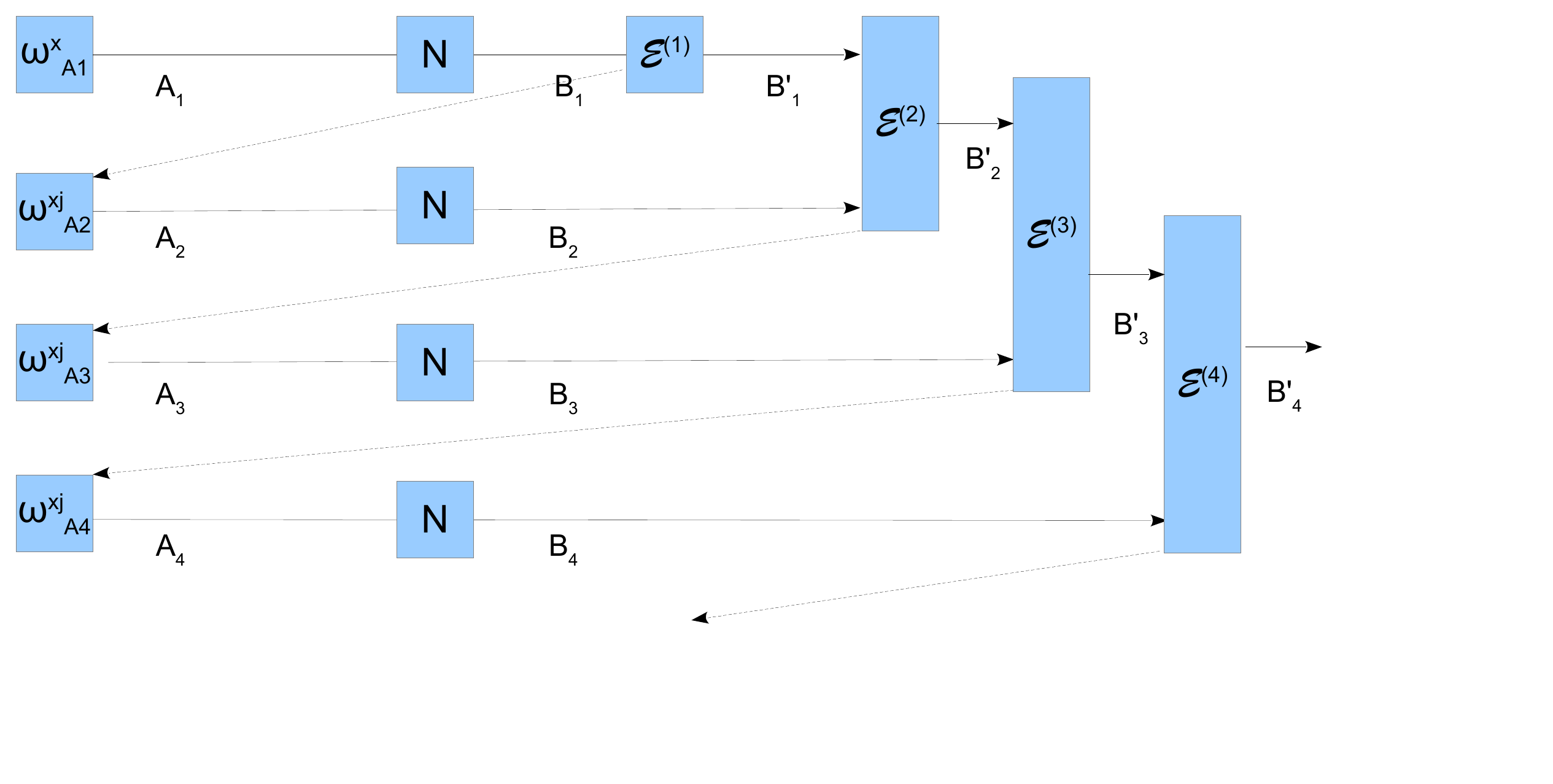}
	\includegraphics[width=0.45\linewidth, trim={2cm 6cm 2cm 2cm},clip=true]{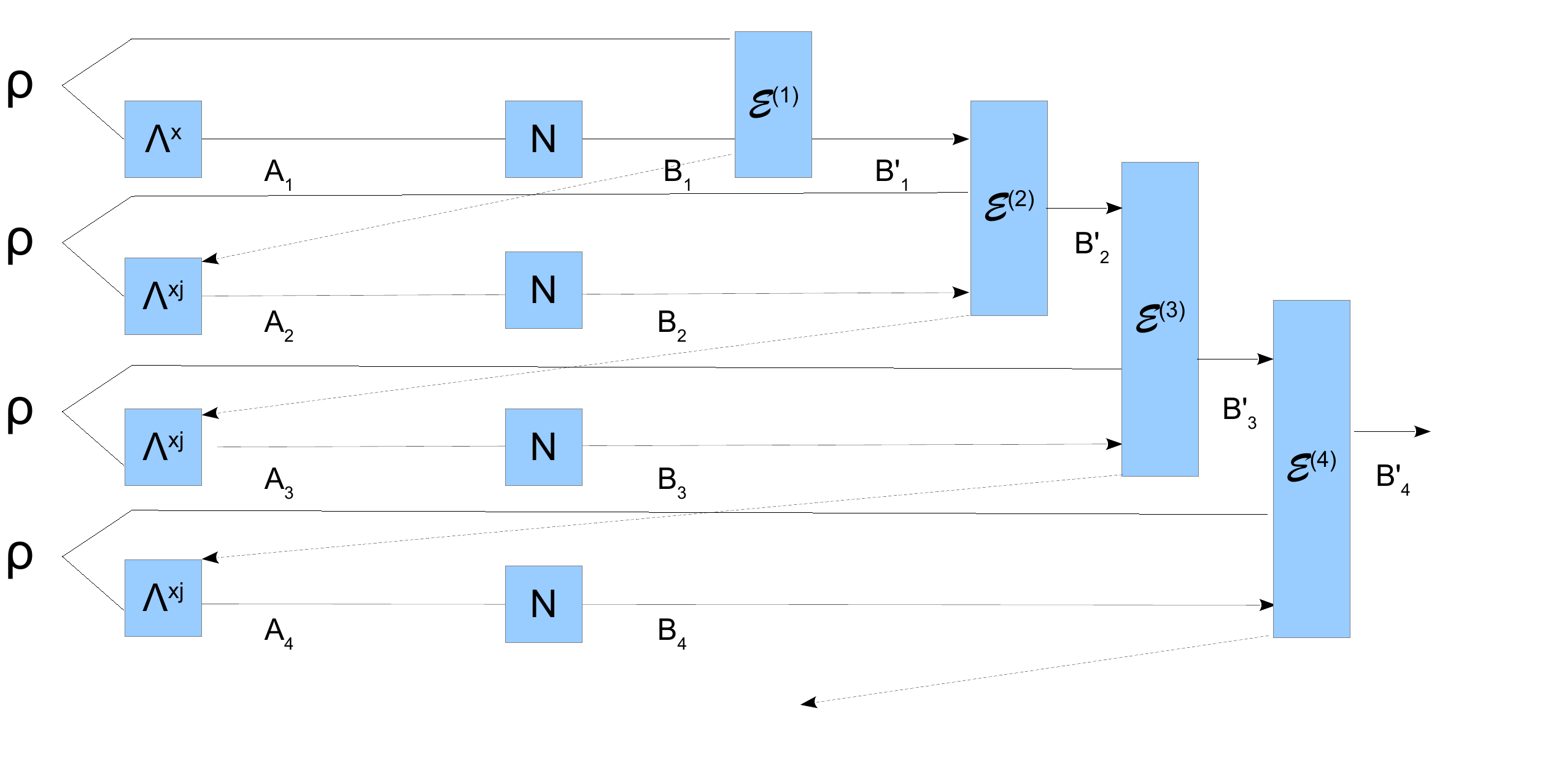}

		\caption{Scheme of the first four channel uses in our feedback-assisted protocol. The l.h.s. shows the case without $\r$-assistance, the r.h.s. shows the case with $\r$-assistance.  After $n$ channel uses Bob can perform his usual decoding operation, which is not shown here.}\label{fig:FB}
		\end{figure}

\subsection{Noisy entanglement and feedback-assisted capacity}\label{II-D}
Combining the settings of the previous two sections, we here consider the scenario where, in addition to free feedback, Alice and Bob are provided with $n$ copies of an entangled state $\r_{\tilde{A}\tilde{B}}$ for $n$ channel uses. We define a feedback-assisted protocol with product encoding using $\r$ analogously to the the protocol without $\r$ defined in the previous section, except that Alice, instead of preparing product states to be entered into the channel, applies encoding maps 
\be
\Lambda_{\tilde{A}_1\to A_1}^{(1,x)}\te\Lambda_{\tilde{A}_2\to A_2}^{(2,j_1x)}\te\Lambda_{\tilde{A}_3\to A_3}^{(3,j_2x)}\te\cdots\te\Lambda_{\tilde{A}_n\to A_n}^{(n,j_{n-1}x)},
\ee 
where, as in (\ref{eq:productEncoding}), $j_i$ are the feedback messages, to her shares of the $n$ copies of $\r_{\tilde{A}\tilde{B}}$; and Bob uses his shares of the first $n-1$ copies of $\r_{\tilde{A}\tilde{B}}$ as additional inputs to his instruments 
\begin{align}
&\left\{\cE^{(1)}_{j_1}:B_1\tilde{B}_1\to B'_1\right\}_{j_1},\\
&\left\{\cE^{(2)}_{j_2}:B_2B'_1\tilde{B}_2\to B'_2\right\}_{j_2},\\
&\cdots\\
&\left\{\cE^{(n-1)}_{j_{n-1}}:B_{n-1}B'_{n-2}\tilde{B}_{n-1}\to B'_{n-1}\right\}_{j_{n-1}},
\end{align}
and the last one as an additional input to his final decoding operation. See also figure \ref{fig:FB}. As maximum achievable rate of classical communication using this scheme, we define the \emph{$\r$-assisted classical feedback product capacity} of a channel $\mc{N}$ as
\be\label{eq:feedMemCapRho}
\cC_{\te,\r}^{\leftarrow}(\mc{N})=\sup\left\{R:R\text{ is an achievable rate with product encodings, free feedback and using }\r\right\}.
\ee 

\subsection{Channels with memory}\label{II-E}

We will now move beyond the scenario of independent channel uses, considering channels with finite classical memory \cite{bowen2004quantum,kretschmann2005quantum}. Such channels are of the form
\be\label{eq:memChan}
\mc{N}:A\te M\to M\te B,
\ee
where $M$ denotes the memory register and $A$ and $B$ the input and output systems, respectively. The initial state of the memory can be set by either Alice or a third party, Eve. The final state of the memory, after many uses of the channel, can be given to either Bob or Eve. In this work, we use the setting where the initial value is provided by Eve and the final value is given to Eve. In addition to memory, we also allow for free feedback (i.e. unlimited classical backward communication from Bob to Alice) after each use of the channel. Whereas the advantage feedback can or cannot provide in classical communication over memoryless channels has been investigated, little is known about the use of feedback in the presence of memory.\pagebreak

\section{Communication via memory channels\protect\\ assisted by free backward communication}\label{sec:feedback}

In this section we present our main result, namely we show that for any entangled state $\r_{\tilde{A}\tilde{B}}$ there exists a memory channel, the $\r$-assisted classical feedback product capacity (\ref{eq:feedMemCapRho}) of which is strictly larger than its unassisted classical feedback product capacity (\ref{eq:feedMemCap}). We can show this result by borrowing from the theory of channel discrimination. Namely, it has been shown \cite{piani2009all} that for any entangled state $\r_{\tilde{A}\tilde{B}}$, there exist two channels $m_0$ and $m_1$ that can be better distinguished with assistance of$\r_{\tilde{A}\tilde{B}}$ than with separable inputs. Our main idea is to employ this phenomenon in a scheme of communication where Alice and Bob are presented a mixture of two pairs of channels $m^{A'\to B'}_0\te n^{A''\to B''}_0$ and $m^{A'\to B'}_1\te n^{A''\to B''}_1$, such that, in a first step, they can send a probe state from $A'$ to $B'$ in order to determine which pair they have been presented with and, in a second step, are allowed to modify the encoding of $A''$ accordingly. In such a scheme, an advantage in distinguishing $m_0$ and $m_1$ can provide an advantage for classical communication from $A''$ to $B''$. By introducing a memory that stores the value $i=0,1$ aa well as an even odd counter, i.e. a memory that changes its value after every channel use, we can construct a memory channel that incorporates this communication scheme. The memory channel is of the form
\be
T^d:A\te I \te K\to I \te K\te B,
\ee
with
\be\label{eq:discChan}
T^d(\s_A\te\pro{i}_{I}\te\pro{k}_{K})=\begin{cases} 
      \pro{i}_{I}\te\pro{k\oplus1}_{K}\te m_i(\s)_B&\text{ if }k=0\\
      \frac{1}{2}\1_{I}\te\pro{k\oplus1}_{K}\te n_i^d(\s)_B&\text{ if }k=1.
   \end{cases}
\ee
Here $I$ and $K$ are two single-bit memory registers, respectively. Register $K$ contains the information whether the channel has been used an even or odd number of times. The bit value $i$ of $I$ is initially set to a random value, which neither Alice nor Bob know. During each channel use, the value $i$ determines which of a pair of two channels $m_{0,1}$ (in odd rounds) or $n^d_{0,1}$ (in even rounds) is used. Thus the two channels become correlated. After each even round, $K$ is randomised again. 

As the channels $n^d_{0,1}$, which are used in even rounds, we choose qc-channels that are defined by
\be
n^d_i(\s)=\sum_{k=1}^d\ketbra{k}{v_k^{(i)}}\s\ketbra{v_k^{(i)}}{k},
\ee
where $d\in\NN$ and $\cB_0=\{\ket{v_k^{(0)}}\}$ as well as $\cB_1=\{\ket{v_k^{(1)}}\}$ are mutually unbiased bases (MUBs), i.e. $\left|\braket{v_k^{(0)}}{v_\ell^{(1)}}\right|^2=\frac{1}{d}$. If a message is encoded in basis $\cB_i$, channel $n^d_i$ can achieve a rate of $\log d$. If, on the other hand, the encoding is in basis $\cB_{i\oplus1}$, the message will be completely depolarised by $n^d_i$. If presented with a random mixture of $n^d_{0,1}$, Alice will a priori not know which of the MUBs to encode her message in. 

If feedback is allowed, however, the structure of (\ref{eq:discChan}) allows Alice and Bob to perform the following protocol: In every odd round, Alice sends some state that helps Bob to distinguish between channels $m_{0,1}$ by means of a two outcome POVM $\{Q, \1-Q\}$. The result, $j$, is then sent back to Alice, who, in the following even round, applies a unitary operation modifying the encoding. Without loss of generality $U_0=\1$ and $U_1=\sum_{k=1}^d\ketbra{v_k^{(1)}}{v_k^{(0)}}$ is the unitary transforming $\cB_0$ to $\cB_1$. See also Figure \ref{timeflow}. This protocol is then performed many times, after which Bob decodes, as usual. 

\begin{figure}

	\centering
	\includegraphics[width=0.45\textwidth]{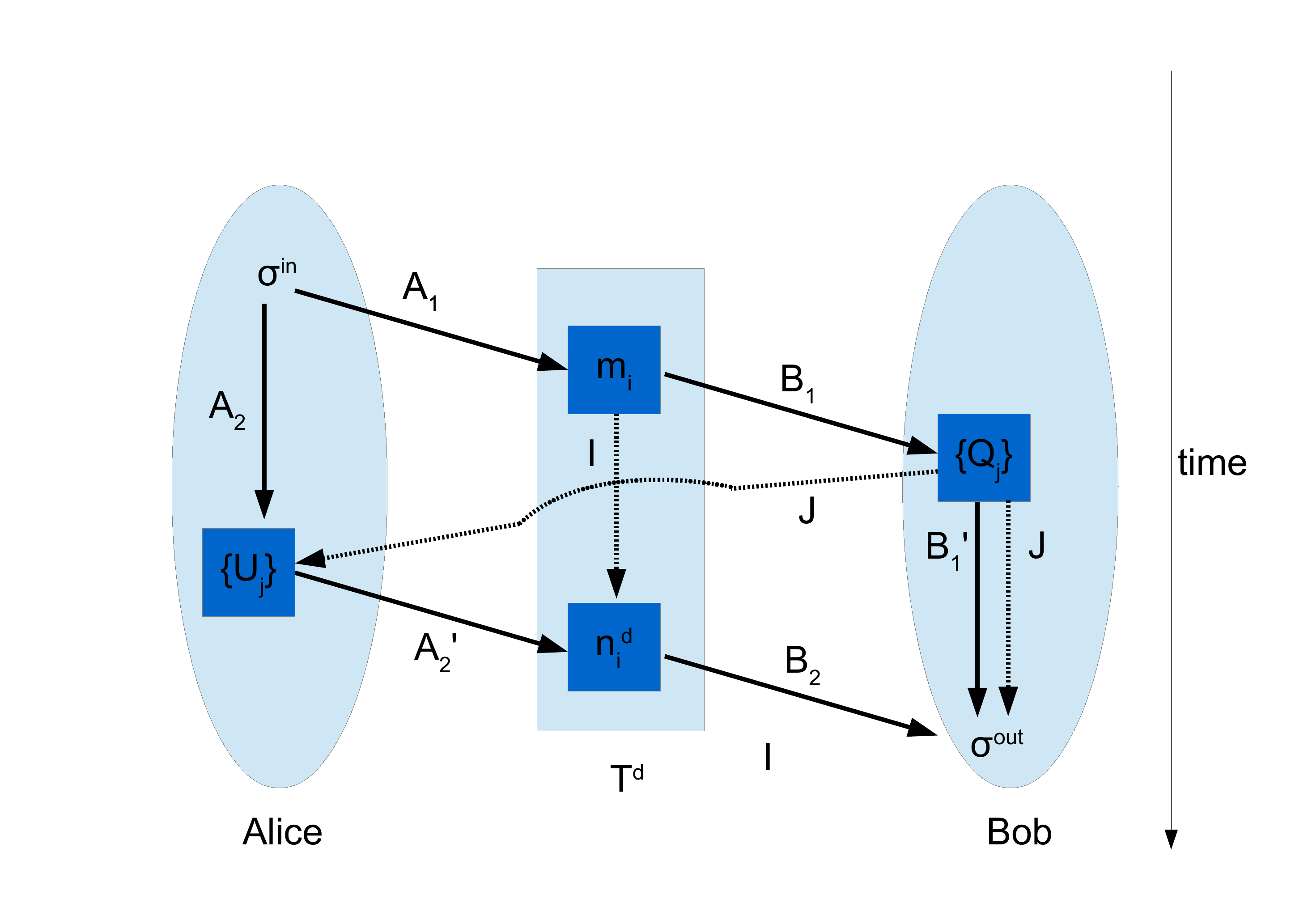}
	\includegraphics[width=0.45\textwidth]{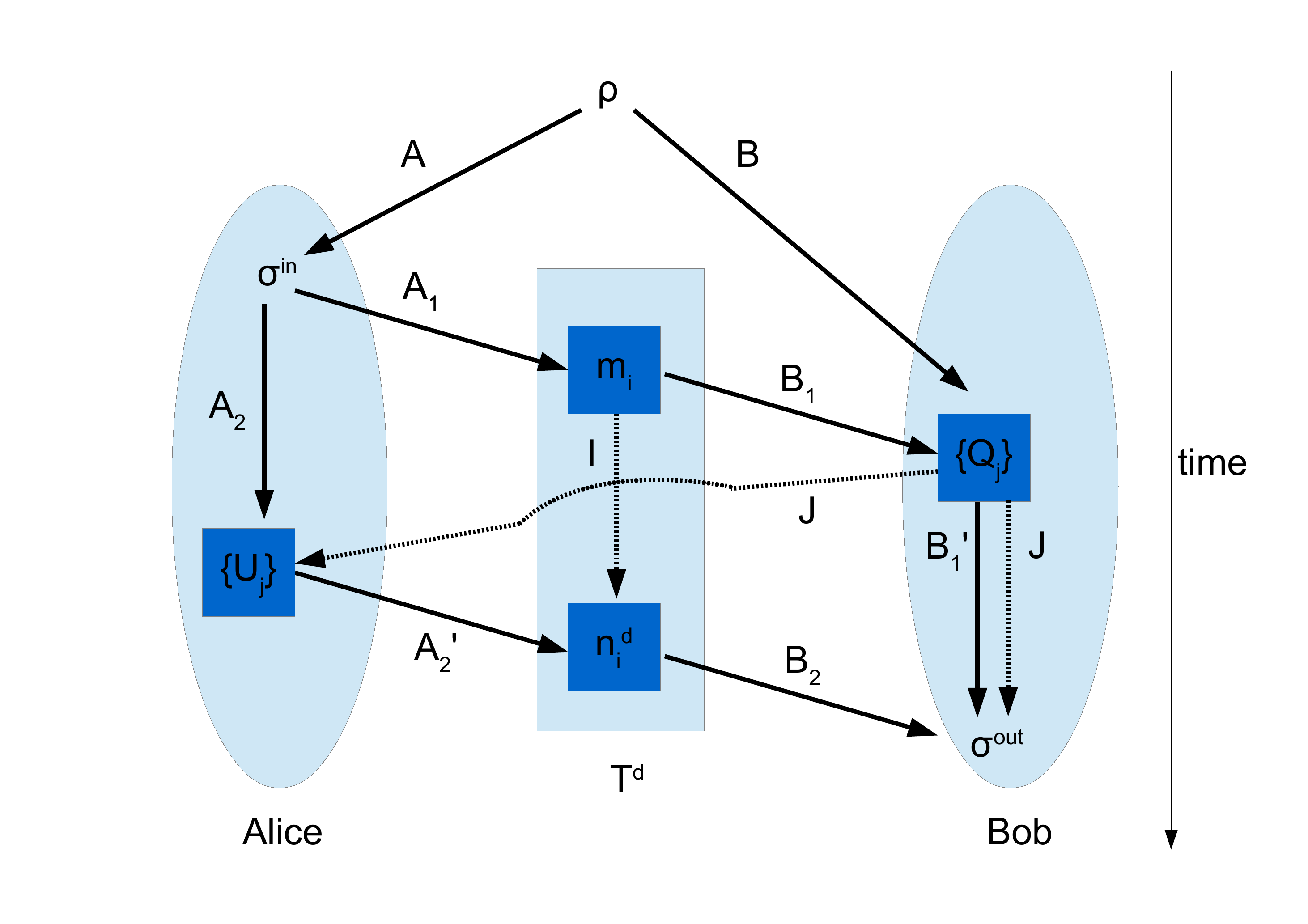}
		\caption{An even and odd round of our protocol, without (left) and with (right) entanglement assistance. The solid lines represent quantum systems, the dashed lines classical systems.}\label{timeflow}
\end{figure}

As mentioned above, the achievable rate of communication of this protocol greatly depends the ability to distinguish channels $m_0$ and $m_0$, which is where the entanglement assistance comes in. Without entanglement assistance, the maximum probability of distinguishing the two channels in one attempt is determined by the trace distance of the two channels
\be\label{Dist}
\left\|m_0-m_1\right\|_1=\max_{\sigma_A}\left\|m_0(\sigma_A)-m_1(\s_A)\right\|_1.
\ee
If, however, Alice and Bob are given a copy of an entangled state $\r$, the maximum success probability is determined by 
\be\label{AssDist}
\left\|m_0-m_1\right\|_\r=\left\|m_0\te\id(\rho_{AB})-m_1\te\id(\rho_{AB})\right\|_1.
\ee
We can now make use of the fact that any entangled state can provide an advantage in channel discrimination \cite{piani2009all}. Namely, a bipartite state $\rho_{AB}$ is entangled if and only if there exist two entanglement-breaking channels $m_{0,1}:A\to B_1$ such that
\be\label{EntAdv}
\left\|m_0-m_1\right\|_\r>\left\|m_0-m_1\right\|_1.
\ee
By choosing $d$ large enough compared to the output dimension of the $m_i$, we can achieve a locking effect, where even the smallest advantage a weakly entangled state can provide in channel discrimination can be amplified. This allows us to proof our main result:
\begin{theorem}\label{Theo:main}
For any entangled state $\r_{AB}$, there exists $d\in\NN$, such that 
\be
\cC_{\te,\r}^{\leftarrow}(T^d)>\cC_{\te}^{\leftarrow}(T^d).
\ee
\end{theorem}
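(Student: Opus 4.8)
The plan is to analyse the two capacities $\cC_{\te}^{\leftarrow}(T^d)$ and $\cC_{\te,\r}^{\leftarrow}(T^d)$ directly via the protocol described above, and to exhibit a gap by a locking/amplification argument in the parameter $d$. First I would fix the entanglement-breaking channels $m_0,m_1:A\to B_1$ guaranteed by the Piani--Watrous result, so that $\delta_{\r}:=\|m_0-m_1\|_\r>\|m_0-m_1\|_1=:\delta>0$ for the given entangled state $\r_{AB}$; note $\delta<2$ strictly since the $m_i$ are not perfectly distinguishable even without assistance (indeed $\delta_\r<2$ as well, but the key point is the strict inequality $\delta_\r>\delta$). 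I would then set up the memory channel $T^d$ with the even/odd counter $K$ and the hidden bit $I$ as in (\ref{eq:discChan}), and observe that in any feedback-assisted product-encoding protocol each odd round is a channel-discrimination round for the pair $(m_0,m_1)$ (whose outcome Bob feeds back) and each even round is a transmission round through $n^d_i$ where Alice may rotate her MUB encoding by the feedback-selected unitary $U_j\in\{\1,U_1\}$.

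For the \textbf{achievability with $\r$-assistance}, I would have Alice and Bob use their copy of $\r_{AB}$ in each odd round to run the optimal assisted discrimination of $m_0$ vs.\ $m_1$, obtaining the correct value of $i$ with probability $\tfrac12(1+\delta_\r/2)$; conditioned on a correct guess Alice applies $U_j=U_i$ and the even-round channel $n^d_i$ transmits a $d$-dimensional symbol faithfully, while conditioned on a wrong guess the symbol is fully depolarised. Averaging, the per-even-round Holevo quantity is at least roughly $\tfrac12(1+\delta_\r/2)\log d - h(\cdot)$, and spreading over the two rounds of a block gives a rate $\gtrsim \tfrac14(1+\delta_\r/2)\log d$. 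For the \textbf{converse without $\r$}, the crucial claim is that \emph{any} product-encoding, feedback-assisted strategy on $T^d$ can at best learn $i$ in the odd rounds with bias $\delta/2$ (this is exactly where (\ref{Dist}) and the restriction to product — hence unentangled across rounds — probe states matter, so the feedback is the only channel through which information about $i$ flows to Alice), hence its even-round rate is bounded by $\tfrac12(1+\delta/2)\log d + O(1)$, and its odd-round contribution is bounded by $\log\dim B_1 = O(1)$ independent of $d$. Dividing by the block length, $\cC_{\te}^{\leftarrow}(T^d)\le \tfrac14(1+\delta/2)\log d + O(1)$ while $\cC_{\te,\r}^{\leftarrow}(T^d)\ge \tfrac14(1+\delta_\r/2)\log d - O(1)$; since $\delta_\r>\delta$, choosing $d$ large makes the $\Theta(\log d)$ terms dominate the $O(1)$ corrections and yields the strict separation.

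The \textbf{main obstacle} is the converse bound $\cC_{\te}^{\leftarrow}(T^d)\le \tfrac14(1+\delta/2)\log d+O(1)$: one must show that no clever feedback strategy with product encodings can extract more information about the hidden bit $i$ than the single-shot discrimination bias $\delta$ allows, even when Bob's instruments are adaptive and memory correlates successive channel uses. I would handle this by tracking, round by round, a ``knowledge-of-$i$'' quantity — e.g.\ bounding the total variation between the $i=0$ and $i=1$ branches of the classical register held jointly by Alice and Bob after each pair of rounds — and using that the odd-round map is $m_i$ applied to an \emph{unentangled} probe (so its distinguishing power is capped by $\delta$), together with the fact that the even-round channels $n^d_0,n^d_1$ are qc-channels into fixed orthonormal bases and therefore leak no information about $i$ beyond what the chosen encoding basis already reveals. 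Combining this branch-indistinguishability with the standard memoryless feedback bound $\chi^{(n)}\le n\chi$ applied separately to the $m$-part and to the $n^d$-part, and with continuity of the Holevo quantity in the $\delta$-perturbation, should close the argument; the MUB condition $|\braket{v_k^{(0)}}{v_\ell^{(1)}}|^2=1/d$ is what guarantees that an encoding optimised for the wrong basis is maximally useless, so that the even-round rate really does degrade from $\log d$ by the full discrimination-error factor.
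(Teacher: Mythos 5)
Your overall architecture coincides with the paper's: lower-bound the assisted capacity by the explicit ``discriminate $m_i$ in odd rounds, feed back, encode in the matching MUB in even rounds'' protocol (this is exactly the paper's Lemma \ref{L2}, and your numbers, success probability $\tfrac12(1+\delta_\r/2)$ and rate $\approx\tfrac14(1+\delta_\r/2)\log d$, are correct), upper-bound the unassisted capacity by showing the achievable bias about $i$ is capped by the unassisted distinguishability, and then let $d\to\infty$. However, the converse — which you yourself flag as the main obstacle — is where essentially all the work lies, and the route you sketch does not close it. The missing quantitative ingredient is this: a converse must handle \emph{arbitrary} even-round inputs, not inputs ``optimised for the wrong basis''; what one needs is that for \emph{any} state $\omega$ (which, conditioned on the feedback $j$ and message $x$, cannot depend on $i$), the $i$-averaged output entropy satisfies $\sum_i p_{i|jx}\,S\bigl(n^d_i(\omega)\bigr)\ge\bigl(\tfrac12-\tfrac{|\e_{j|x}|}{4p_{j|x}}\bigr)\log d$. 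The paper obtains exactly this from the Maassen--Uffink entropic uncertainty relation for the two MUB measurements, inside an induction on the $n$-round Holevo quantity $\chi^{(n)}$ using data processing and concavity of the conditional entropy, and then bounds the accumulated biases $\e_{j|x}$ by $\tfrac12\|m_0-m_1\|_1$ using that separable (not merely product) inputs give no advantage in discriminating $m_0$ from $m_1$ — the separable version is needed because the odd-round probe is classically correlated, through the feedback and the message, with Bob's retained quantum memory. Your sketch contains neither the uncertainty-relation step nor this separable-input reduction in usable form.

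Two of the specific tools you propose would moreover fail. First, Bowen's memoryless feedback bound $\chi^{(n)}\le n\chi$ cannot be ``applied separately to the $m$-part and to the $n^d$-part'': $T^d$ is a memory channel and the two parts of each block are correlated precisely through the hidden bit $i$, which is the effect you must control, so there is no memoryless channel to which the bound applies. Second, ``continuity of the Holevo quantity in the $\delta$-perturbation'' gives corrections with generic constants (of order $\delta\log d$ with an unspecified prefactor), whereas here the constants matter: the Piani--Watrous gap $\delta_\r-\delta$ can be arbitrarily small for weakly entangled states, so any converse coefficient looser than $\tfrac14(1+\delta/2)$ in front of $\log d$ destroys the separation. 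Relatedly, your statement that the even-round channels ``leak no information about $i$'' is false as stated (the outputs of $n^d_0$ and $n^d_1$ on a known input can be near-perfectly distinguishable); this leakage is harmless only because $I$ is re-randomised after every even round, and the paper's conditional-entropy/uncertainty-relation argument handles it implicitly, while your proposed ``track a knowledge-of-$i$ total-variation quantity'' bookkeeping would have to confront it explicitly. So the proposal identifies the right protocol and the right intuition, but the converse bound $\cC_{\te}^{\leftarrow}(T^d)\le\tfrac{1+\e}{4}\log d+\cO(\log\tilde d)$ is not proved by the means you indicate.
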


Before proving Theorem \ref{Theo:main}, let us explicitly write down the feedback-assisted $n$-round Holevo capacity $\chi^{(n)}\left(T^d\right)$ of $T^d$ which we have defined above. For $n=1$, it reduces to the Holevo capacity of a mixture of $m_0$ and $m_1$,
\be\label{Holevo1}
\chi^{(1)}\left(T^d\right)=\chi\left(\frac{1}{2}(m_0+m_1)\right).
\ee
For two channel uses with feedback after the first channel use, it reads
\be\label{Holevo2}
\chi^{(2)}\left(T^d\right)=\max_{\left\{p_x,\omega^{(1,x)}_{A_1}\te\omega^{(2,xj_1)}_{A_2}\right\},\left\{\cE^{(1)}_{j_1}\right\}}I(X:B'_1B_2)_{\s^{(2)}},
\ee
for the cq-state  
\be\label{cq2}
\s^{(2)}_{XB'_1B_2}=\sum_{x}p_x\pro{x}_{X}\te \s^{(2,x)}_{B'_1B_2},
\ee
where
\be\label{cq2end}
\s^{(2,x)}_{B'_1B_2}=\frac{1}{2}\sum_{i_1=0}^1\sum_{j_1}p_{j_1|i_1x}\underbrace{\frac{1}{p_{j_1|i_1x}}\cE^{(1)}_{j_1}\circ m_{i_1}\left(\omega^{(1,x)}_{A_1}\right)}_{=:\tilde{\s}_{B'_1}^{(2,i_1j_1x)}}\te \underbrace{n^{d}_{i_1}\left(\omega^{(2,j_1x)}_{A_2}\right)}_{=:\hat{\s}_{B_2}^{(2,i_1j_1x)}},
\ee
where $p_{j_1|i_1x}=\tr\left[\cE^{(1)}_{j_1}\circ m_{i_1}\left(\omega^{(1,x)}_{A_1}\right)\right]$. Here $\{\cE^{(1)}_{j_1}\}$, with $\cE^{(1)}_{j_1}:B_1\to B'_1$, is a quantum instrument. The classical feedback consists of the outcome $j_1$ of the instrument. Depending on feedback $j_1$ and $x$, Alice inputs state $\omega^{(2,j_1x)}_{A_2}$. Note that whereas Bob can keep a copy of the classical feedback message $j_1$ this can be included into the output register $B'_1$ of the instrument. 

For $n>2$, we have
\be\label{eq:Holevon}
\chi^{(n)}\left(T^d\right)=\max I(X:B'_{n-1}B_n)_{\s^{(n)}},
\ee
where the maximisation is over product input ensembles $\left\{p_x,\omega^{(1,x)}_{A_1}\te\omega^{(2,j_1x)}_{A_2}\te\cdots\te\omega^{(n,j_{n-1}x)}_{A_n}\right\}$ as well as instruments $\left\{\cE^{(1)}_{j_1}\right\},...,\left\{\cE^{(n-1)}_{j_{n-1}}\right\}$ and the cq-state is given by
\be\label{eq:cqn}
\s^{(n)}_{XB'_{n-1}B_n}=\sum_{x}p_x\pro{x}_{X}\te\s^{(n,x)}_{B'_{n-1}B_n}.
\ee
For even $n>2$, $\s^{(n,x)}$ is defined recursively by
\be
\s^{(n,x)}_{B'_{n-1}B_n}=\frac{1}{2}\sum_{i_{n-1}=0}^1\sum_{j_{n-1}}p_{j_{n-1}|i_{n-1}x}\tilde{\s}_{B'_{n-1}}^{(n,i_{n-1}j_{n-1}x)}\te n_{i_{n-1}}^d\left(\omega^{(n,j_{n-1}x)}_{A_{n}}\right),
\ee
where
\begin{align}
&\tilde{\s}_{B'_{n-1}}^{(n,i_{n-1}j_{n-1}x)}:=\frac{1}{p_{j_{n-1}|i_{n-1}x}}\cE^{(n-1)}_{j_{n-1}}\left(\sum_{j_{n-2}}\cE^{(n-2)}_{j_{n-2}}\left(\s^{(n-2,x)}_{B'_{n-3}B_{n-2}}\right)\te m_{i_{n-1}}\left(\omega^{(n-1,j_{n-2}x)}_{A_{n-1}}\right)\right)\label{eq:29},
\end{align}
with
\be
p_{j_{n-1}|i_{n-1}x}=\tr\left[\cE^{(n-1)}_{j_{n-1}}\left(\sum_{j_{n-2}}\cE^{(n-2)}_{j_{n-2}}\left(\s^{(n-2,x)}_{B'_{n-3}B_{n-2}}\right)\te m_{i_{n-1}}\left(\omega^{(n-1,j_{n-2}x)}_{A_{n-1}}\right)\right)\right]
\ee
and  instruments
\begin{align}
&\cE^{(n-1)}_{j_{n-1}}:B'_{n-2}B_{n-1}\to B'_{n-1},\\
&\cE_{j_{n-2}}^{(n-2)}:B'_{n-3}B_{n-2}\to B'_{n-2},
\end{align}
that act collectively on all previous outcomes. For odd $n>2$, we define
\be
\s^{(n,x)}_{B'_{n-1}B_n}=\frac{1}{2}\sum_{i_n=0}^1\sum_{j_{n-1}} \cE^{(n-1)}_{j_{n-1}}\left(\sum_{j_{n-2}}\cE^{(n-2)}_{j_{n-2}}\left(\s^{(n-2,x)}_{B'_{n-3}B_{n-2}}\right)\te n^d_{i_n}\left(\omega^{(n-1,j_{n-2}x)}_{A_{n-1}}\right)\right)\te m_{i_{n}}\left(\omega^{(n,j_{n-1}x)}_{A_{n}}\right).
\ee


See also Figure \ref{setup}.

\begin{figure}
\centering
	\includegraphics[width=0.9\textwidth]{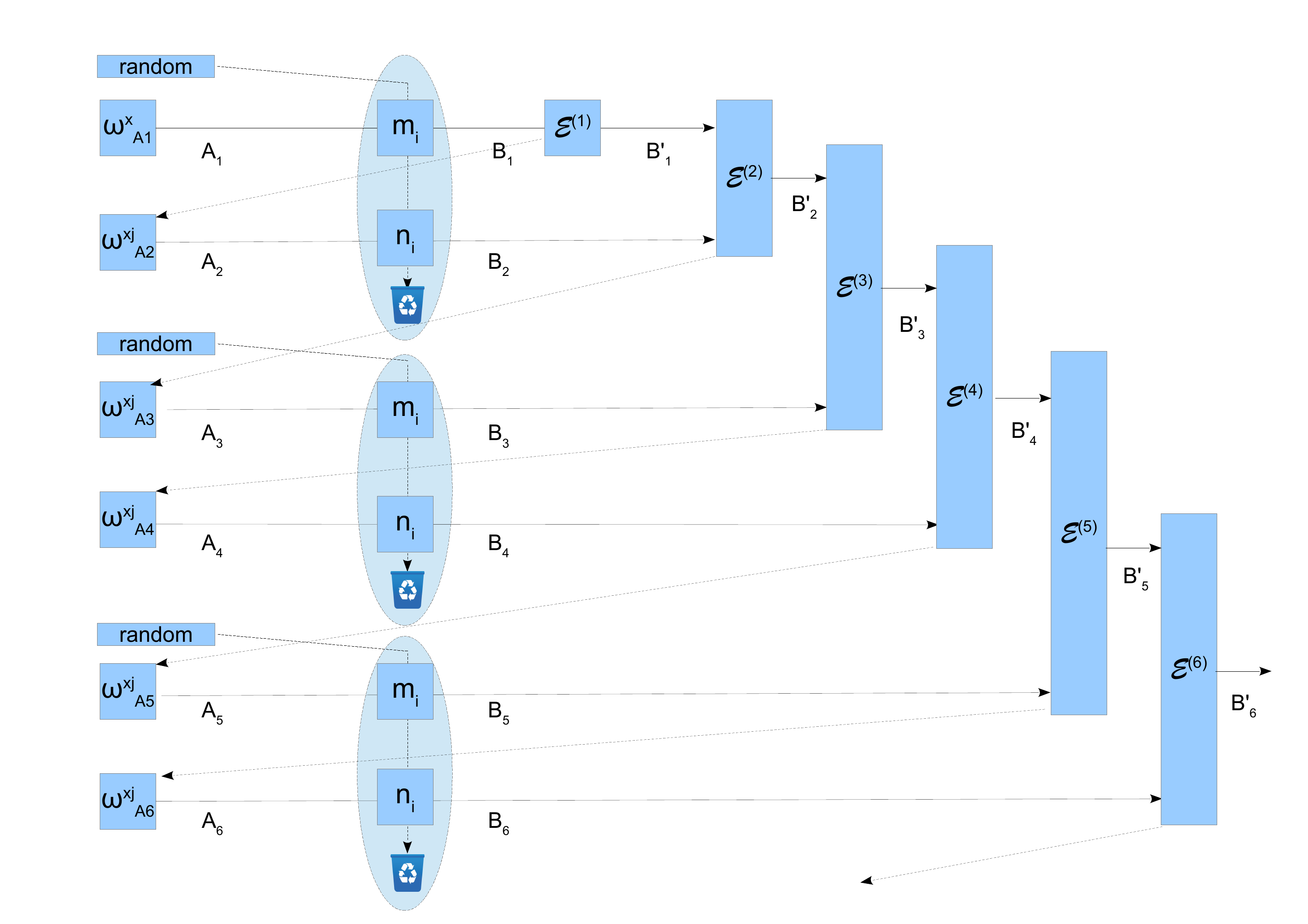}
	\caption{Our setting for feedback-assisted classical communication with product encoding via channel (\ref{eq:discChan}).}\label{setup}
\end{figure}

\if0
Before proving our main result, we need several lemmas.
\begin{lemma}\label{L0}
Given a bipartite state $\rho_{AB}$, and a trace non-increasing CP map $\cE: A\to A'$, such that  $\cE(\rho_{AB})=p\sigma_{A'B}$, where the dimension of $A'$ might be unbounded. Then the mutual information of $\sigma_{A'B}$ is bounded as follows
 \be
 I(A':B)_\s\le2 \log |A|,
 \ee
where $|A|=\text{dim}(A)$.
\end{lemma}
\begin{proof}
A trace non-increasing CP map can be represented as $\cE(\rho_A)=\tr_E V\rho_AV^{\dagger}$, where $V:A\to A'E$ and $V^{\dagger}V\le \1$. As the local rank does not increase under the operation $V$, it holds
\be
I(A'E:B)_{\sigma_{A'EB}}\le 2\log |A|,
\ee
where $p\sigma_{A'EB}=V\rho_{AB}V^{\dagger}$. Since the mutual information does not increase under partial trace, we obtain
\be
I(A':B)_{\sigma_{A'B}}\le2\log |A|,
\ee
finishing the proof.
\end{proof}
\fi

We are now ready to upper bound the classical capacity $\cC_{\te}^{\leftarrow}(T^d)$ of the unassisted scheme.
\begin{lemma}\label{L1}
The classical feedback product capacity of memory channel (\ref{eq:discChan}) is upper bounded as follows:
\be
\cC_{\te}^{\leftarrow}(T^d)\le\frac{1+\e}{4}\log{d}+\cO\left(\log \tilde{d}\right),
\ee
where $\e=\frac{1}{2}\left\|m_0-m_1\right\|_1$, $d=\text{dim}(B_2)$ and $\tilde{d}=\text{dim}(B_1).$
\end{lemma}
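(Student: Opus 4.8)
The plan is to work with the feedback-assisted multi-letter quantities $\chi^{(n)}(T^d)$ introduced above and to bound them by an induction that strips off one odd round and one even round at a time. The first reduction is the standard converse: Fano's inequality together with the converse part of the HSW theorem give, for every rate $R$ achievable with product encodings and free feedback, $nR\le\chi^{(n)}(T^d)+1+\ve_n nR$ with $\ve_n\to 0$, hence $\cC_{\te}^{\leftarrow}(T^d)\le\limsup_n\tfrac1n\chi^{(n)}(T^d)$. It therefore suffices to bound $\chi^{(2L)}(T^d)$; an unpaired final odd round only adds $\cO(\log\tilde d)$, which is washed out by the $1/n$. Throughout we use that Bob may keep a copy of every feedback message $j_k$ inside his register $B'_k$, as allowed in the definition of the protocol.

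\emph{Peeling two rounds.} For even $n$ write $\chi^{(n)}(T^d)=I(X:B'_{n-1}B_n)=I(X:B'_{n-1})+I(X:B_n|B'_{n-1})$. Since $B'_{n-1}$ is obtained from $B'_{n-2}B_{n-1}$ by the instrument $\{\cE^{(n-1)}_{j_{n-1}}\}$, data processing gives $I(X:B'_{n-1})\le I(X:B'_{n-2}B_{n-1})=I(X:B'_{n-2})+I(X:B_{n-1}|B'_{n-2})$, and similarly $I(X:B'_{n-2})\le\chi^{(n-2)}(T^d)$; moreover $I(X:B_{n-1}|B'_{n-2})\le 2\log\tilde d$ since the odd round outputs the $\tilde d$-dimensional system $B_{n-1}$. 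Hence
\be\label{peel}
\chi^{(n)}(T^d)\le\chi^{(n-2)}(T^d)+2\log\tilde d+\sup I(X:B_n|B'_{n-1}),
\ee
the supremum being over all $n$-round feedback-assisted product strategies. Iterating (\ref{peel}) from $n=2L$ down to $\chi^{(0)}(T^d)=0$ and inserting the estimate of the next paragraph gives $\chi^{(2L)}(T^d)\le L\bigl(\tfrac{1+\e}{2}\log d+2\log\tilde d\bigr)$; dividing by $2L$ proves the Lemma.

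\emph{The even-round estimate (the heart of the argument).} I claim $\sup I(X:B_n|B'_{n-1})\le\tfrac{1+\e}{2}\log d$ for even $n$. Let $i\in\{0,1\}$ be the memory bit active in rounds $n-1$ and $n$; since register $I$ was re-randomised after round $n-2$, $i$ is uniform and independent of $X$, of $B'_{n-2}$, and of Alice's round-$(n-1)$ encoding. The even output $B_n=n^d_i(\omega^{(n,j_{n-1}x)}_{A_n})$ is the classical result of measuring Alice's state in the MUB $\cB_i$, so $S(B_n|B'_{n-1})\le\log d$. For the other term, two facts combine: (i) $B'_{n-1}$ is the image, under a fixed channel (the instrument $\cE^{(n-1)}$ together with the $i$-independent register $B'_{n-2}$), of $m_i$ acting on a fixed probe, so the $i=0$ and $i=1$ versions of the state of $(X,B'_{n-1})$ lie at trace distance $\le\tfrac12\|m_0-m_1\|_1=\e$; hence the optimal guess of $i$ from $(X,B'_{n-1})$ — a fortiori from the coarser data $(X,j_{n-1})$ — is correct with probability at most $\tfrac12(1+\e)$; and (ii) since $\cB_0,\cB_1$ are mutually unbiased, Maassen--Uffink gives $H(n^d_0(\mu))+H(n^d_1(\mu))\ge\log d$ for every state $\mu$, whence $\pi_0 H(n^d_0(\mu))+\pi_1 H(n^d_1(\mu))\ge(1-\max_i\pi_i)\log d$ for any $(\pi_0,\pi_1)$. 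Conditioning on the classical data $(X,j_{n-1},i)$ makes $B_n$ a deterministic measurement outcome that decouples from the quantum residual in $B'_{n-1}$, so with $\pi_i=\Pr(i\,|\,X,j_{n-1})$,
\be
S(B_n|XB'_{n-1})\ \ge\ S(B_n|XB'_{n-1}\,i)\ \ge\ \Bigl(1-\tfrac12(1+\e)\Bigr)\log d\ =\ \tfrac{1-\e}{2}\log d,
\ee
where the first inequality is ``conditioning does not increase entropy'' and the second combines (ii) with (i). Subtracting from $S(B_n|B'_{n-1})\le\log d$ gives $I(X:B_n|B'_{n-1})\le\tfrac{1+\e}{2}\log d$, as needed in (\ref{peel}).

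\emph{Where the difficulty lies.} The converse and the dimension bound on odd rounds are routine; the delicate part is the even-round estimate. One must (a) verify carefully that the active memory bit is genuinely independent of $X$, of $B'_{n-2}$, and of Alice's round-$(n-1)$ encoding, since this is exactly what licenses the $\e$-distinguishability bound in (i); (b) handle the mixed classical/quantum bookkeeping — the product decomposition underlying $S(B_n|XB'_{n-1}\,i)$ appears only after conditioning on the classical triple $(X,j_{n-1},i)$, and one must check that the quantum part of $B'_{n-1}$ that Bob retains adds no distinguishing power for $i$ beyond the $\e$ already counted (it does not, again by contraction of the trace norm under the instrument); and (c) keep the feedback loop consistent, which is the reason for carrying $j_{n-1}$ explicitly inside $B'_{n-1}$ in (\ref{peel}). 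I would also spell out the $\cO(\log\tilde d)$ bookkeeping and the case of an odd number of channel uses.
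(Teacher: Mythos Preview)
Your proof is correct and follows essentially the same route as the paper: induction via the chain rule to peel off an odd/even pair, the dimension bound $\cO(\log\tilde d)$ for the odd round, and for the even round the Maassen--Uffink uncertainty relation combined with a distinguishability bound on the memory bit $i$. The only cosmetic difference is that the paper makes the separability argument in your point~(b) fully explicit---the round-$(n{-}1)$ probe is not literally ``fixed'' but correlated via $j_{n-2}$ with $B'_{n-2}$, giving a \emph{separable} input to the discrimination---by computing $\e_{j|x}:=p_{j|0x}-p_{j|1x}$ and bounding $\sum_j|\e_{j|x}|\le 2\e$ directly, rather than via your (equivalent) guessing-probability phrasing.
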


\begin{proof}
As the converse part of the HSW theorem \cite{1996quant.ph.11023H} can be extended to the case including feedback and memory, it holds
\be
\cC_{\te}^{\leftarrow}\left(T^d\right)\le\limsup_{n\to\infty}\frac{1}{n}\chi^{(n)}\left(T^d\right).
\ee
We will now prove by induction that for every even $n\ge2$ it holds
\be\label{InductionHypothesis2}
\chi^{(n)}\left(T^d\right)\le\frac{n}{2}\left(\frac{1+\e}{2}\log d +\cO\left(\log\tilde{d}\right)\right)+\cO\left(\log d\tilde{d}\right).
\ee
The base case $n=2$ follows trivially from the definition of $\chi^{(2)}$. Let us assume that (\ref{InductionHypothesis2}) holds for $n$. For the ensemble and operations maximising (\ref{eq:Holevon}), it holds
\begin{align}
\chi^{(n+2)}(T^d)&=I(X:B'_{n+1}B_{n+2})_{\s^{(n+2)}}\label{first:eq}\\
&=I(X:B'_{n+1})_{\s^{(n+2)}}+I(X:B_{n+2}|B'_{n+1})_{\s^{(n+2)}}\\
&\le\chi^{(n+1)}(T^d)+I(X:B_{n+2}|B'_{n+1})_{\s^{(n+2)}},\label{eq:56}
\end{align}
where we have made use of the data processing inequality. Performing the same procedure for $\chi^{(n+1)}$, we obtain
\begin{align}
\chi^{(n+1)}(T^d)&=I(X:B'_nB_{n+1})_{\s^{(n+1)}}\\
&=I(X:B'_n)_{\s^{(n+1)}}+I(X:B_{n+1}|B'_n)_{\s^{(n+1)}}\\
&\le\chi^{(n)}(T^d)+I(X:B_{n+1}|B'_n)_{\s^{(n+1)}}\\
&\le\chi^{(n)}(T^d)+\cO(\log \tilde{d}).\label{eq:60}
\end{align}
As for the second term in (\ref{eq:56}), let us note that the cq-state $\s^{(n+2)}_{XB'_{n+1}B_{n+2}}$, defined by (\ref{eq:cqn}) - (\ref{eq:29}), is separable w.r.t. the partition $XB'_{n+1}:B_{n+2}$.
This allows us to perform the following strategy \cite{bowen2005feedback} (in what follows we use indices $i:=i_{n+1}$, $j:=j_{n+1}$ and  $k:=j_{n}$).
\begin{align}
I(X:B_{n+2}|B'_{n+1})_{\s^{(n+2)}}&=S(B_{n+2}|B'_{n+1})_{\s^{(n+2)}}-S(B_{n+2}|XB'_{n+1})_{\s^{(n+2)}}\\
&\le \log d-\frac{1}{2}\sum_{i=0}^1\sum_{jx}p_xp_{j|ix}S(B_{n+2}|XB'_{n+1})_{\pro{x}\te\tilde{\s}^{(n+2,ijx)}\te n_{i}^d\left(\omega^{(n+2,jx)}\right)}\\
&=\log d-\frac{1}{2}\sum_{i=0}^1\sum_{jx}p_xp_{j|ix}S(B_{n+2})_{ n_{i}^d\left(\omega^{(n+2,jx)}\right)}\\
&=\log d-\sum_{jx}p_{jx}\sum_{i=0}^1p_{i|jx}S(B_{n+2})_{ n_{i}^d\left(\omega^{(n+2,jx)}\right)},\label{eq:46}
\end{align}

where, in the first inequality, we have made use of the concavity of the conditional entropy \cite{lieb2002proof}. Let us now consider a particular outcome $j$  and a particular message $x$. \emph{Conditioned} on $j$ and $x$, the probability that channel $m_{0,1}$ has been used can be expressed by 

\begin{align}
p_{0|jx}&=\frac{1}{2}+\frac{\e_{j|x}}{4p_{j|x}},\\
p_{1|jx}&=\frac{1}{2}-\frac{\e_{j|x}}{4p_{j|x}},
\end{align}
where 

\be
\e_{j|x}:=p_{j|0x}-p_{j|1x}=\tr\left[\cE^{(n+1)}_{j}\left(\sum_k\cE^{(n)}_{k}\left(\s^{(n,x)}_{B'_{n-1}B_{n}}\right)\te \left[m_0-m_1\right]\left(\omega^{(n+1,kx)}_{A_{n+1}}\right)\right)\right].
\ee

Going back to (\ref{eq:46}), let us recall 
that the channels $n_i^d$ are measurements in two MUBs. Note that, even if the probability of obtaining a particular outcome $j$ can depend on $i$, the state $\omega^{(n+2,jx)}_{A_{n+2}}$ itself, \emph{conditioned} on a particular outcome $j$, does not depend on $i$. Hence, we can apply an entropic uncertainty relation \cite{maassen1988generalized} in order to show that for every $j$ and $x$ it holds

\begin{align}
&\sum_{i=0}^1p_{i|jx}S\left(B_{n+2}\right)_{\hat{\s}^{(n+2,ijx)}}\\
&=\left(\frac{1}{2}+\frac{\e_{j|x}}{4p_{j|x}}\right) S\left(B_{n+2}\right)_{\hat{\s}^{(n+2,0jx)}}+\left(\frac{1}{2}-\frac{\e_{j|x}}{4p_{j|x}}\right)S\left(B_{n+2}\right)_{\hat{\s}^{(n+2,1jx)}}\\
&\ge \left(\frac{1}{2}-\frac{\left|\e_{j|x}\right|}{4p_{j|x}}\right) \log d.
\end{align}

Hence, we obtain
\be
I(X:B_{n+2}|B'_{n+1})_{\s^{(n+2)}}\le\left(\frac{1}{2}+\frac{1}{4}\sum_{jx}p_{jx}\frac{\left|\e_{j|x}\right|}{p_{j|x}}\right)\log d.
\ee
It holds further
\be
\sum_{jx}p_{jx}\frac{\left|\e_{j|x}\right|}{p_{j|x}}=\sum_{jx}p_{x}\left|\e_{j|x}\right|=\sum_{x}p_{x}\left[\sum_{j\in \cJ^{x}_0}\e_{j|x}-\sum_{j\in \cJ^{x}_1}\e_{j|x}\right],\label{eq54n}
\ee
where, given $x$, we have defined $\cJ^{x}_{0}=\left\{j:\e_{j|x}\ge0\right\}$ and $\cJ^{x}_{1}=\left\{j:\e_{j|x}<0\right\}$. Since $\sum_{j\in \cJ^{x}_0}\cE_{j}^{(n+1)}$ and $\sum_{j\in \cJ^{x}_1}\cE_{j}^{(n+1)}$ are sub-channels, they can be represented by Kraus operators as $\sum_{j\in \cJ^{x}_\ell}\cE_{j}^{(n+1)}(\cdot)=\sum_k F^{\ell x}_k(\cdot)F^{\ell x\dagger}_{k}$ where $F^{\ell x}:=\sum_kF^{\ell x\dagger}_kF^{\ell x}_{k}\le \1$ and $\ell=0,1$. Hence

\begin{align}
\sum_{j\in \cJ^{x}_0}\e_{j|x}&=\tr\left[F^{0x}\left(\sum_k\cE^{(n)}_{k}\left(\s^{(n,x)}_{B'_{n-1}B_{n}}\right)\te\left[m_0-m_1\right]\left(\omega^{(n+1,kx)}_{A_{n+1}}\right)\right)\right]\\
&\le\max_{\s_{AB}\text{ separable}} \max_{0\le\Lambda\le\1}\tr\left[\Lambda\left(m_0\te\id_B(\s_{AB})-m_1\te\id_B(\s_{AB})\right)\right]\\
&=\frac{1}{2}\left\|m_0-m_1\right\|_1=\e,
\end{align}

where we have used the fact that separable states cannot provide an advantage in channel discrimination \cite{piani2009all}. Up to a sign change the same holds true for the second term in (\ref{eq54n}), hence
\be
I(X:B_{n+2}|B'_{n+1})\le\frac{1+\e}{2}\log d.
\ee
Putting it all together, we obtain
\be\label{last:eq}
\chi^{(n+2)}(T^d)\le\chi^{(n)}(T^d)+\frac{1+\e}{2}\log d+\cO\left(\log\tilde{d}\right).
\ee
Application of the induction hypothesis completes the proof. 
\end{proof}
\if0
\begin{conjecture}\label{additivity}
For every even $n\ge2$ it holds
\be\label{InductionHypothesis}
\chi^{(n)}\left(T^d\right)\le\frac{n}{2}\left(\chi^{(2)}\left(T^d\right) +\cO\left(\log\tilde{d}\right)\right).
\ee
\end{conjecture}
\begin{proof}
Let us assume that (\ref{InductionHypothesis}) holds for $n$. For the ensemble and operations maximising (\ref{eq:Holevon}), it holds
\begin{align}
\chi^{(n+2)}(T^d)&=I(X:J^{n+1}B'_{n+1}B_{n+2})\label{first:eq}\\
&=I(X:J^{n+1}B'_{n+1})+I(X:B_{n+2}|J^{n+1}B'_{n+1})\\
&\le\chi^{(n+1)}(T^d)+I(X:B_{n+2}|J^{n+1}B'_{n+1}),\label{eq:56}
\end{align}
where we have made use of the data processing inequality.  Performing the same procedure for $\chi^{(n+1)}$, we obtain
\begin{align}
\chi^{(n+1)}(T^d)&=I(X:J^nB'_nB_{n+1})\\
&=I(X:R)+I(X:B_{n+1}|J^nB'_n)\\
&\le\chi^{(n)}(T^d)+I(X:B_{n+1}|J^nB'_n)\\
&\le\chi^{(n)}(T^d)+\cO(\log \tilde{d}).\label{eq:60}
\end{align}
As for the second term in (\ref{eq:56}), let us note that the cq-state $\s^{(n+2)}$, defined by (\ref{eq:cqn}), is separable w.r.t. the partition $XJ^{n+1}B'_{n+1}:B_{n+2}$, i.e.
\begin{equation}
\s^{(n+2)}_{XJ^{n+1}B'_{n+1}:B_{n+2}}=\frac{1}{2}\sum_{i=0}^1\sum_{jx}p_xp_{j|ix}\tilde{\s}^{jix}_{XJ^{n+1}B'_{n+1}}\te\hat{\s}^{jix}_{B_{n+2}}.
\end{equation}
This allows perform the following strategy \cite{bowen2005feedback}.
\begin{align}
I(X:B_{n+2}|J^{n+1}B'_{n+1})_{\s^{(n+2)}}&=S(B_{n+2}|J^{n+1}B'_{n+1})_{\s^{(n+2)}}-S(B_{n+2}|XJ^{n+1}B'_{n+1})_{\s^{(n+2)}}\\
&\le S(B_{n+2})_{\s^{(n+2)}}-\frac{1}{2}\sum_{i=0}^1\sum_{jx}p_xp_{j|ix}S(B_{n+2}|XJ^{n+1}B'_{n+1})_{\tilde{\s}^{jix}\te\hat{\s}^{jix}}\\
&=S(B_{n+2})_{\s^{(n+2)}}-\frac{1}{2}\sum_{i=0}^1\sum_{jx}p_xp_{j|ix}S(B_{n+2})_{\hat{\s}^{jix}}\\
&=\chi\left(\left\{\frac{p_xp_{j|ix}}{2},\hat{\s}_{B_{n+2}}^{jix}\right\}\right)\\
&=I(IJX:B_{n+2})_{\tilde\s},
\end{align}
where, in the first inequality, we have made use of the concavity of the conditional entropy \cite{lieb2002proof}. In the last line, the cq-state is given by
\begin{align}
\tilde\s_{IJXB}&=\frac{1}{2}\sum_{i=0}^1\sum_{jx}p_xp_{j|ix}\pro{ijx}_{IJX}\te\hat{\s}_{B_{n+2}}^{jix}\\
&=\frac{1}{2}\sum_{i=0}^1\sum_{\tilde x}p_{\tilde{x}|i}\pro{i\tilde x}_{I\tilde X}\te n^d_i(\hat{\s}_{A'_{n+2}}^{\tilde{x}}),
\end{align}
where we have relabelled $\tilde{x}=xj$, $\tilde X=XJ$ and defined $\hat{\s}_{A'_{n+2}}^{\tilde{x}}=\Lambda^{(n+1)}_{jx}\left(\omega^x_{A_{n+2}}\right)$. By the chain rule it holds
\be
I(I\tilde{X}:{B_{n+2}})_{\tilde\s}=I(I:{B_{n+2}}|\tilde X)_{\tilde\s}+I(\tilde{X}:{B_{n+2}})_{\tilde\s}\le\cO(1)+I(\tilde{X}:{B_{n+2}})_{\tilde\s}.
\ee
Further, it holds
\be
I(\tilde{X}:{B_{n+2}})_{\tilde\s}\le\max_{\left\{p_{\tilde x|i},\hat{\s}^{\tilde x}\right\}}I(\tilde{X}:B)_{\tilde\s}
\ee

(...)
Now acting from behind, it holds by data processing
\begin{align}
\chi^{(2)}\left(T^d\right)&\ge\max_{\left\{p_{{x}},\s^{{x}}_{A_1},\s_{A_2}^{{x}} \right\},\left\{\cE^{(1)}_j,\Lambda^{(1)}_{{x}j}\right\}}I({X}:B_2)_{\tr_{B'_1J_1}\s^{(2)}}=(???)\max_{\left\{p_{{x}},\hat{\s}^{{x}}\right\}}I({X}:B_2)_{\s^{(1)}},
\end{align}
where
\begin{equation}
\s^{(1)}=\frac{1}{2}\sum_{i=0}^1\sum_{x}p_{ x}\pro{x}_{X}\te n_i^d(\hat{\s}^{x})
\end{equation}
(LOOKS LIKE WE CANNOT USE BOWENS STRATEGY BECAUSE OF THE MEMORY!!!)
(...)
All that is left now is to apply the induction hypothesis to (\ref{eq:60}).
\end{proof}
\fi
Let us now the consider the entanglement assisted case. 
\begin{lemma}\label{L2}
With $\d=\frac{1}{2}\left\|m_0-m_1\right\|_\r$, it holds
\be
\cC_{\te,\r}^{\leftarrow}(T^d)\ge\frac{1+\d}{2}\log{d}-1.
\ee
\end{lemma}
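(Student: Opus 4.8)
The plan is to exhibit an explicit entanglement-assisted feedback protocol with product encodings achieving the claimed rate. First I would describe the protocol: in each odd round, Alice feeds her half of a fresh copy of $\r_{AB}$ (after possibly some fixed local preprocessing map achieving the optimum in (\ref{AssDist})) into the channel $m_{i}$ determined by the hidden bit $i$; Bob receives $m_i\te\id(\r)$ on his registers $B'\tilde B$ and applies the optimal two-outcome POVM $\{Q,\1-Q\}$ distinguishing $m_0\te\id(\r)$ from $m_1\te\id(\r)$. He sends the outcome $j\in\{0,1\}$ back to Alice. In the following even round, Alice wants to communicate; she picks a message letter encoded in basis $\cB_0$ and, conditioned on the feedback $j$, applies the unitary $U_j$ (with $U_0=\1$, $U_1=\sum_k\ketbra{v_k^{(1)}}{v_k^{(0)}}$), so that her channel input is in basis $\cB_j$. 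Then $n_i^d$ maps this input perfectly (rate $\log d$) if $j=i$, and completely depolarizes it if $j\ne i$.

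Next I would lower bound $\chi^{(2)}(T^d)$, hence $\cC_{\te,\r}^{\leftarrow}(T^d)\ge\frac12\chi^{(2)}(T^d)$ by blockwise repetition (two rounds per block, $n/2$ blocks, and the bound $\cC_{\te,\r}^{\leftarrow}(T^d)\ge\limsup_n \frac1n\chi^{(n)}$ applied to the product of such blocks, noting the fresh randomization of $K$ after each even round and fresh copy of $\r$ make the blocks independent). Within one block, take $X$ uniform over the $d$ letters of $\cB_0$ and plug the above encoding maps into (\ref{Holevo2})--(\ref{cq2end}). The relevant cq-state on $X B_2$ (tracing out $B'_1$, which only helps by data processing: $I(X:B'_1B_2)\ge I(X:B_2)$) is $\sum_x \frac1d\pro{x}_X\te \bar n(\pro{x})_{B_2}$, where $\bar n$ is the effective channel on the message letter, averaged over $i$ and $j$. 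By construction the letter survives undisturbed with probability $\Pr[j=i]=p_{\text{succ}}$ and is replaced by the maximally mixed state $\frac1d\1$ with probability $1-p_{\text{succ}}$, where $p_{\text{succ}}=\frac12+\frac12\|m_0\te\id(\r)-m_1\te\id(\r)\|_1/2=\frac12+\frac\d2$ is exactly the optimal discrimination success probability (Helstrom). A short computation with the uniform input then gives $I(X:B_2)\ge p_{\text{succ}}\log d = \frac{1+\d}{2}\log d$; more carefully, since the "fail" branch also carries some tiny information one only needs $I(X:B_2)\ge \frac{1+\d}{2}\log d - O(1)$, and the stated $-1$ absorbs this. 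I would also note that Alice does not even need the hidden bit $i$: the feedback $j$ suffices, which is why this is a legitimate product encoding with feedback.

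The main obstacle — really the only subtle point — is the bookkeeping that the protocol genuinely fits the product-encoding feedback framework of Section \ref{II-D}: Alice's even-round encoding map $\Lambda^{(2,jx)}=U_j(\cdot)U_j^\dagger$ acting on her share of the (unused-in-this-branch) copy of $\r$ must be written as a valid map $\tilde A_2\to A_2$, the odd-round map must preserve Alice's half of $\r$, and one must verify that the POVM outcome being a classical message $j$ fed forward is exactly captured by the instrument formalism. Once that is in place, the entropy estimate is the elementary one above: I would present it as a one-line application of Helstrom's bound for the success probability together with concavity of entropy to get $S(B_2)_{\bar n(\r_X)}=\log d$ and $\sum_x\frac1d S(\bar n(\pro{x}))\le (1-p_{\text{succ}})\log d$, yielding $I(X:B_2)\ge p_{\text{succ}}\log d\ge\frac{1+\d}{2}\log d-1$.
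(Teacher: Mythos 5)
Your protocol and its analysis coincide with the paper's: odd rounds spend one copy of $\r$ on Helstrom discrimination of $m_0,m_1$ with success probability $\frac{1+\d}{2}$, the feedback bit selects the MUB used in the following even round, and a wrong basis completely depolarizes the letter. The paper packages the resulting statistics as a $d$-ary symmetric channel and bounds the residual entropy term by $\left(1-\frac{1+\d}{2}\right)\log d+1$; your direct Holevo computation on the block cq-state is the same estimate in different clothing (and your remark that the $-1$ absorbs the extra term is needed, since your intermediate claim $\sum_x\frac1d S(\bar n(\pro{x}))\le(1-p_{\text{succ}})\log d$ is not literally true without adding the binary-entropy/mixing contribution $h(p_{\text{succ}})\le1$). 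The bookkeeping points you flag are indeed only bookkeeping; note in passing that it is the register $I$, not $K$, that is re-randomized after each even round, which is what makes your two-round blocks independent.

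The one substantive discrepancy is the factor $\tfrac12$. You correctly charge two channel uses per block and write $\cC_{\te,\r}^{\leftarrow}(T^d)\ge\frac12\chi^{(2)}$, but your final estimate only establishes the per-block quantity $\frac{1+\d}{2}\log d-1$; assembled, your argument proves $\cC_{\te,\r}^{\leftarrow}(T^d)\ge\frac{1+\d}{4}\log d-\frac12$, not the bound stated in the lemma, and you never reconcile the two. The paper's own proof obtains the stated constant by crediting the full symmetric-channel capacity to $T^d$ without dividing by the two uses per block, i.e.\ it implicitly counts rate per data (even) round; under the paper's per-channel-use definition of rate (the same convention that puts the factor $\frac n2$ into the proof of Lemma \ref{L1}), your accounting is the defensible one, and the lemma as stated is stronger than what the protocol can give (for $\d$ close to $1$ and $d\gg\tilde d$ it would exceed the trivial ceiling of roughly $\frac12\log d+\cO(\log\tilde d)$). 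So either drop your factor $\frac12$ to reproduce the paper's claim under its implicit convention, or keep your normalization and state the weaker bound $\frac{1+\d}{4}\log d-\frac12$, observing that it still suffices for Theorem \ref{Theo:main} because it exceeds the Lemma \ref{L1} upper bound $\frac{1+\e}{4}\log d+\cO(\log\tilde d)$ by $\frac{\d-\e}{4}\log d-\cO(\log\tilde d)>0$ for large $d$. As written, the proposal asserts the lemma while its own chain of inequalities delivers only half the leading term; that mismatch must be addressed explicitly.
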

\begin{proof}
Let us assume that Alice and Bob share many copies of an entangled state $\r$. A possible strategy for Alice and Bob is the one we have already mentioned: In every odd round Alice enters her share of a copy of $\r$ into the channel. Bob tries to learn the value $i$ by means of a joint measurement of the output and his share of the copy of $\r$. His success probability is $\frac{1+\d}{2}$. He sends the result ($j$) to Alice, who then encodes her message in the corresponding MUB $\{\ket{v^{(j)}}\}$, which corresponds to transmission of the message via a classical symmetric channel with transmission probabilities $p_{y|x}=\frac{1+\d}{2}\d_{yx}+\frac{1-\d}{2d}$. This is a $d$-ary symmetric channel, whose capacity is well-known \cite{TC06}, and given by
\be
\cC(p_{y|x})=\log d-H\left(\underbrace{\frac{1-\d}{2d},...,\frac{1-\d}{2d}}_{d-1\text{ times}},\frac{1+\d}{2}+\frac{1-\d}{2d}\right).
\ee
As this is an achievable rate, it holds $\cC_{\te,\r}^{\leftarrow}(T^d)\ge \cC(p_{y|x})$. By the chain rule, it further holds 
\begin{align}
H\left(\frac{1-\d}{2d},...,\frac{1-\d}{2d},\frac{1+\d}{2}+\frac{1-\d}{2d}\right)&\le\left(1-\frac{1+\d}{2}-\frac{1-\d}{2d}\right)\log (d-1)+1\\
&\le\left(1-\frac{1+\d}{2}\right)\log d+1.
\end{align}
Hence
\be
\cC_{\te,\r}^{\leftarrow}(T^d)\ge\frac{1+\d}{2}\log d-1,
\ee
finishing the proof.
\end{proof}
Lemmas \ref{L1} and \ref{L2} allow us to prove the main result of this section.\newline

\begin{proof}{\bf of Theorem \ref{Theo:main}}
By Lemmas \ref{L1} and \ref{L2}, it holds
\be
\cC_{\te,\r}^{\leftarrow}(T^d)-\cC_{\te}^{\leftarrow}(T^d)\ge\frac{1}{2}(\d-\e)\log d-\cO(\log\tilde d),
\ee
where $\tilde{d}=\max\{\text{dim}(B_1),\text{dim}(B_1)\}$ is independent of $d$. By assumption, $\delta > \epsilon$, hence choosing $d$ large enough, we obtain
\be
\cC_{\te,\r}^{\leftarrow}(T^d)-\cC_{\te}^{\leftarrow}(T^d)>0,
\ee
finishing the proof.
\end{proof}

\section{Towards memoryless channels:\protect\\ An entropic entanglement witness}\label{sec:Shor}
Let us now return to the usual setting of many independent uses of a quantum channel and present our second approach.

The main result of this section is an entropic condition on a pair of a bipartite state $\r_{\tilde A\tilde B}$ and a channel $\mc{M}:\tilde A\to B$, which is sufficient for the existence of another channel $\mc{N}:\tilde AD\to B$, the Holevo capacity and, if $\mc{M}$ is entanglement breaking, classical capacity of which can be increased by $\r_{\tilde A\tilde B}$. As only entangled states can provide an advantage over the unassisted capacity, this result also provides a sufficient criterion for entanglement. Our reasoning begins with the observation that the minimum output entropy of a channel can be expressed as a conditional entropy:
\begin{lemma}\label{lemma:SSA}
For a channel $\mc{M}:\tilde A\to B$ it holds
\be\label{eq:SSA}
S_\text{min}(\mc{M})=\min_{\rho_{\tilde A\tilde B}\in\text{SEP}}S(B|\tilde B)_{\hat{\s}},
\ee
where $\hat{\s} = (M\otimes\id)(\rho)$ and $S_\text{min}(\mc{M}) = \min_\psi S(\mc{M}(\psi))$ is the minimum output entropy of $\mc{M}$.
\end{lemma}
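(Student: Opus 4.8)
The plan is to prove the two inequalities separately. The ``$\le$'' direction is immediate from an explicit choice of separable state: let $\psi^*_{\tilde A}$ be a pure input attaining the minimum output entropy --- recall that, by concavity of the von Neumann entropy, $S_\text{min}(\mc M)$ is always attained on a pure input, which is why it can be written as $\min_\psi S(\mc M(\psi))$. Taking the product (hence separable) state $\r_{\tilde A\tilde B}=\psi^*_{\tilde A}\te\phi_{\tilde B}$ for an arbitrary pure $\phi_{\tilde B}$, the output $\hat\s_{B\tilde B}=\mc M(\psi^*)_B\te\phi_{\tilde B}$ is again a product, so $S(B|\tilde B)_{\hat\s}=S(B)_{\hat\s}=S(\mc M(\psi^*))=S_\text{min}(\mc M)$. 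This already shows $\min_{\r\in\text{SEP}}S(B|\tilde B)_{\hat\s}\le S_\text{min}(\mc M)$.

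For the ``$\ge$'' direction, I would take an arbitrary separable state and write it as a convex combination of pure product states, $\r_{\tilde A\tilde B}=\sum_x p_x\,\psi^x_{\tilde A}\te\phi^x_{\tilde B}$, so that $\hat\s_{B\tilde B}=\sum_x p_x\,\mc M(\psi^x)_B\te\phi^x_{\tilde B}$. The key move is to promote this decomposition to a cq-state by adjoining a classical flag register $X$ recording the label $x$,
\be
\s_{XB\tilde B}=\sum_x p_x\,\pro{x}_X\te\mc M(\psi^x)_B\te\phi^x_{\tilde B},
\ee
whose $B\tilde B$-marginal is $\hat\s$. Strong subadditivity, in the form $S(B|\tilde B)\ge S(B|\tilde B X)$, then gives $S(B|\tilde B)_{\hat\s}\ge S(B|\tilde B X)_{\s}$; and since conditioned on $X=x$ the state $\mc M(\psi^x)_B\te\phi^x_{\tilde B}$ is a product, $S(B|\tilde B X)_\s=\sum_x p_x S(\mc M(\psi^x))\ge\sum_x p_x S_\text{min}(\mc M)=S_\text{min}(\mc M)$, where each term is bounded using that $\psi^x$ is pure. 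Combining the two bounds proves the identity, and as a bonus shows the minimum over SEP is attained at a pure product input.

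I do not expect a genuine obstacle here: the proof is essentially routine once one spots the flag-register trick and applies strong subadditivity in the correct direction ($S(B|\tilde B)\ge S(B|\tilde B X)$, i.e.\ conditioning on more can only decrease the conditional entropy). The only point that needs a line of justification is the remark that the minimum output entropy is attained on pure inputs, so that the bound $S(\mc M(\psi^x))\ge S_\text{min}(\mc M)$ used term-by-term in the last step is legitimate.
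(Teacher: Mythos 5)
Your proposal is correct and takes essentially the same route as the paper: your flag-register-plus-strong-subadditivity step is precisely the concavity of the conditional entropy that the paper invokes for the ``$\ge$'' direction, and your explicit choice of the product state $\psi^*_{\tilde A}\te\phi_{\tilde B}$ for the ``$\le$'' direction is just a more direct version of the paper's equally short bound via subadditivity, $S(B|\tilde B)_{\hat\s}\le S(B)_{\hat\s}$, followed by restricting to a pure input. No gaps.
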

\begin{proof}
By the strong subadditivity of the von Neumann entropy \cite{lieb2002proof}, the conditional entropy is concave, hence it holds
\begin{align}
\min_{\sigma\in\text{SEP}}S(B|\tilde B)_{\hat{\s}}&\ge\min_{\{p_i,\ket{\phi_i},\ket{\varphi_i}\}}\sum_ip_iS(B|\tilde B)_{\mc{M}(\pro{\phi_i})\te\pro{\varphi_i}}\\
&=\min_{\{p_i,\ket{\phi_i}\}}\sum_ip_iS\left(\mc{M}(\pro{\phi_i})\right)\\
&\ge\min_{\ket{\phi}} S(\mc{M}(\ket{\phi})).
\end{align}
The converse follows from the subadditivity of the von Neumann entropy
\begin{align}
\min_{\sigma\in\text{SEP}}S(B|\tilde B)_{\hat{\s}}&\le\min_{\sigma\in\text{SEP}}S(B)_{\hat{\s}}\\
&=\min_{\{p_i,\ket{\phi_i}\}}S\left(\sum_ip_i\mc{M}(\pro{\phi_i})\right)\\
&\le\min_{\ket{\phi}} S(\mc{M}(\ket{\phi})),
\end{align}
finishing the proof.
\end{proof}
Lemma \ref{lemma:SSA} provides us with an entropic entanglement witness \cite{bauml2015witnessing,rastegin2015uncertainty}: If $S(B|\tilde B)_{\hat{\s}}<S_\text{min}(\mc{M})$, then $\r_{\tilde A \tilde B}$ has to be entangled. We will now show that that this witness has an operational interpretation. 
\begin{theorem}\label{lemma:CondHolevo}
For every state $\rho_{\tilde A \tilde B}$ and channel $\mc{M}:\tilde A\to B$, such that 
\begin{equation}\label{eq:Cond}
  S(B|\tilde B)_\omega < S_\text{min}(\mc{M}),
\end{equation}
there exists a channel $\mc{N}:A\to B$, such that $\chi_{\rho} (\mc{N})>\chi (\mc{N})$.
\end{theorem}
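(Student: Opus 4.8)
The plan is to take for $\mc{N}$ the channel built from $\mc{M}$ by the construction of \cite{shor2004equivalence} (also used in \cite{zhu2017superadditivity,zhu2018superadditivity}): adjoin to $\mc{M}$ a classical control register $D$ of dimension $d_B^{2}$, where $d_B=\dim B$, whose computational basis vectors $\ket{k}_D$ are indexed by the $d_B^{2}$ Weyl--Heisenberg operators $W_k$ on $B$, and set
\be\label{eq:Nconstr}
  \mc{N}(\xi_{\tilde A D})=\sum_{k}(W_k)_B\,\mc{M}\!\left(\bra{k}_D\xi_{\tilde A D}\ket{k}_D\right)(W_k^\dagger)_B ,
\ee
so that $\mc{N}$ dephases $D$ in the computational basis, runs $\mc{M}$ on $\tilde A$, and conjugates the output by the Weyl operator read off from $D$. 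This is manifestly a channel $\tilde A D\to B$, so $A=\tilde A D$ in the statement.

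First I would bound the unassisted Holevo capacity. For a pure input $\ket{\psi}_{\tilde A D}=\sum_k\ket{\alpha_k}_{\tilde A}\ket{k}_D$, (\ref{eq:Nconstr}) gives $\mc{N}(\pro{\psi})=\sum_k q_k\,(W_k)_B\mc{M}(\hat\tau_k)(W_k^\dagger)_B$, a convex combination of states on $B$ with $q_k=\braket{\alpha_k}{\alpha_k}$, $\sum_k q_k=1$, and $\hat\tau_k$ a state on $\tilde A$. Concavity of the von Neumann entropy together with its unitary invariance then give $S(\mc{N}(\pro{\psi}))\ge\sum_k q_k S(\mc{M}(\hat\tau_k))\ge S_\text{min}(\mc{M})$ --- exactly the concavity step behind Lemma~\ref{lemma:SSA}. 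Since also $S(\mc{N}(\sum_x p_x\pro{\psi_x}))\le\log d_B$ for any ensemble, definition (\ref{def:Holevo2}) yields $\chi(\mc{N})\le\log d_B-S_\text{min}(\mc{M})$ (in fact an equality, achieved by the uniform ensemble over the Weyl translates of a minimum-output-entropy input, but the upper bound is all I need).

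Next I would exhibit a $\r$-assisted protocol that beats this. Alice takes $X$ uniform on $\{1,\dots,d_B^{2}\}$ and uses the encoding $\Lambda_x(\cdot)=(\cdot)\te\pro{x}_D$ --- she leaves her half of $\r_{\tilde A\tilde B}$ untouched and merely writes the message into the control register. The conditional output state is then $\s^{(x)}_{B\tilde B}=(W_x\te\1_{\tilde B})\,\omega\,(W_x^\dagger\te\1_{\tilde B})$ with $\omega=(\mc{M}\te\id)(\r)$, so $S(\s^{(x)}_{B\tilde B})=S(\omega)$ for every $x$, whereas by the Weyl twirl and trace preservation of $\mc{M}$ the average is
\be\label{eq:twirl}
  \frac{1}{d_B^{2}}\sum_{x}(W_x\te\1_{\tilde B})\,\omega\,(W_x^\dagger\te\1_{\tilde B})=\frac{1}{d_B}\1_B\te\omega_{\tilde B}=\frac{1}{d_B}\1_B\te\r_{\tilde B}.
\ee
Hence the associated cq-state has $I(X:B\tilde B)=\bigl(\log d_B+S(\r_{\tilde B})\bigr)-S(\omega)=\log d_B-S(B|\tilde B)_\omega$, so $\chi_\r(\mc{N})\ge\log d_B-S(B|\tilde B)_\omega$. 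Combined with $\chi(\mc{N})\le\log d_B-S_\text{min}(\mc{M})$ and the hypothesis $S(B|\tilde B)_\omega<S_\text{min}(\mc{M})$, this gives
\be
  \chi_\r(\mc{N})\ \ge\ \log d_B-S(B|\tilde B)_\omega\ >\ \log d_B-S_\text{min}(\mc{M})\ \ge\ \chi(\mc{N}),
\ee
with no amplification parameter needed: unlike the feedback construction of Section~\ref{sec:feedback}, the strict separation here is precisely the entropy deficit $S_\text{min}(\mc{M})-S(B|\tilde B)_\omega>0$. I would also note that if $\mc{M}(\cdot)=\sum_j\tr[F_j\,\cdot\,]\rho_j$ is entanglement breaking then (\ref{eq:Nconstr}) reads $\mc{N}(\xi)=\sum_{k,j}\tr[(F_j\te\pro{k}_D)\xi]\,W_k\rho_j W_k^\dagger$, again measure--prepare, so $\chi$ equals the capacity on both sides and even $\cC_\r(\mc{N})>\cC(\mc{N})$.

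The only step I expect to need real care is the upper bound $\chi(\mc{N})\le\log d_B-S_\text{min}(\mc{M})$: the Holevo capacity is intractable in general, and the entire force of the construction is that the twirled output forces $S_\text{min}(\mc{M})$ to re-emerge as a \emph{provable lower bound} on the subtracted term $\sum_x p_x S(\mc{N}(\pro{\psi_x}))$. The concavity argument of the second paragraph --- the same one behind Lemma~\ref{lemma:SSA} --- is what delivers this, and one must check that the output twirl in (\ref{eq:twirl}) is \emph{exact}, which is why $\dim D=d_B^{2}$ (the full Weyl group) rather than merely $d_B$; the rest is routine bookkeeping.
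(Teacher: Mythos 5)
Your proposal is correct and follows essentially the same route as the paper: the same Shor-type channel $\mc{N}:\tilde A D\to B$ with a dephased control $D$ steering generalised Pauli (Weyl) unitaries on the output of $\mc{M}$, the same assisted strategy in which Alice leaves $\rho_{\tilde A\tilde B}$ untouched and encodes uniformly in $D$, and the same comparison $\chi_\rho(\mc{N})\ge\log|B|-S(B|\tilde B)_\omega$ versus $\chi(\mc{N})\le\log|B|-S_\text{min}(\mc{M})$. The only difference is that you prove the unassisted upper bound explicitly via concavity of the entropy (the output of $\mc{N}$ on any pure input being a convex mixture of unitarily rotated outputs of $\mc{M}$), whereas the paper imports the identity $\chi(\mc{N})=\log|B|-S_\text{min}(\mc{M})$ from \cite{shor2004equivalence}; your remark on the entanglement-breaking case likewise matches the paper's Corollary \ref{lemma:Cond}.
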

\begin{proof}
Assume we have a channel  $\mc{M}:\tilde A\to B$, fulfilling eq. (\ref{eq:Cond}). Following \cite{shor2004equivalence}, it is possible to construct a channel $\mc{N}:A\to B$, such that 
\be
\chi(\mc{N}) = \log |B| - S_\text{min}(\mc{M}).
\ee
$\mc{N}$ consists of $\mc{M}$ and an additional input $D$ with $|D|=|B|^2$. Define $A=\tilde A\te D$. System $D$ is dephased and acts as a control of generalised Pauli matrices \cite{PR04} acting on the output system $B$ of $\mc{M}$. Hence
\be
\mc{N}(\rho_{\tilde A}\te\pro{jk}_{D})=U^{jk}\mc{M}(\rho_{\tilde A})U^{jk\dagger},
\ee
for $0\le j,k\le|B|$. See also Figure \ref{fig:ShorN}. It is now easy to see that the ensemble 
\be
\left\{\frac{1}{|B|^2},\pro{v^\text{min}_{\tilde A}}\te\pro{jk}_{D}\right\},
\ee
where $\ket{v^\text{min}}$ minimises the output entropy of $\mc{M}$, maximises the Holevo capacity. Namely, the complete output state gets maximally mixed, maximising the first term in eq. (\ref{def:Holevo2}), whereas application of a single unitary does not change the entropy of $\mc{M}(\pro{v^\text{min}})$, minimising the second term.\newline
\begin{figure}
	\centering
	\includegraphics[width=0.4\textwidth]{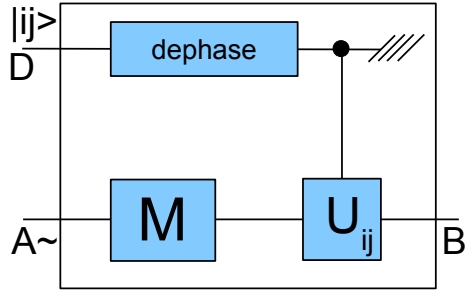}
	
	\caption{Channel $\mc{N}:\tilde AD\to B$ as in \cite{shor2004equivalence}. Input $D$ is dephased, then acts as control on generalised Pauli unitaries acting on the output of $\mc{M}$.}\label{fig:ShorN}
\end{figure}
Let us now lower bound the $\rho$-assisted Holevo capacity of $\mc{N}$. A feasible, however not necessarily optimal, encoding strategy for Alice is the following: She leaves her share of $\rho_{\tilde A\tilde B}$ untouched and, as in the unassisted case, inserts $\pro{jk}$ into the $D$ entry of the channel with equal probability $p_{ij}=\frac{1}{|B|^2}$, i.e.
\be
\cE_{jk}^{\tilde A\to D\tilde A}\te\id_{\tilde B}(\rho_{\tilde A\tilde B})=\pro{jk}_{D}\te\rho_{\tilde A\tilde B}.
\ee
The action of $\mc{N}$ results in $\tilde{\rho}_{B\tilde B}=\1_B\te\rho_{\tilde B}$ and $\tilde{\rho}^{jk}_{B\tilde B}=\left(U^{jk}_B\te\1_{\tilde B}\right)\mc{M}(\hat{\s})\left(U^{jk}_B\te\1_{\tilde B}\right)^\dagger$. Hence, since unitaries do not change the entropy,
\be
\chi_\r(\mc{N})\ge \log |B|+S(\tilde B)_\rho-S(CB\tilde B)_\omega= \log |B|-S(B|\tilde B)_\omega.
\ee
Hence, if eq. (\ref{eq:Cond}) is fulfilled for $\mc{M}$, it holds $\cC_\r(\mc{N})\ge\chi_\r(\mc{N})>\chi(\mc{N})$ finishing the proof. 
\end{proof}

If, in addition, $\mc{M}$ is entanglement-breaking, so will be $\mc{N}$. Hence by \cite{shor2002additivity} it holds $\chi(\mc{N})=\cC(\mc{N})$ and we can formulate the following
\begin{corollary}\label{lemma:Cond}
For every state $\rho_{\tilde A\tilde B}$ and entanglement-breaking channel $\mc{M}:\tilde A\to B$, such that 
condition (\ref{eq:Cond}) is fulfilled, there exists a channel $\mc{N}:A\to B$, such that $\cC_{\rho} (\mc{N})>\cC (\mc{N})$.
\end{corollary}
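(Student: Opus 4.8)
The plan is to derive Corollary~\ref{lemma:Cond} directly from Theorem~\ref{lemma:CondHolevo} by adding just two observations: that the Shor-type construction used in the proof of Theorem~\ref{lemma:CondHolevo} preserves the entanglement-breaking property, and that entanglement-breaking channels have additive Holevo capacity~\cite{shor2002additivity}, so that for such channels $\chi$ coincides with $\cC$.

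First I would invoke Theorem~\ref{lemma:CondHolevo}: since $\mc{M}$ is entanglement-breaking and satisfies~(\ref{eq:Cond}), there is a channel $\mc{N}:A\to B$, with $A=\tilde A\te D$, constructed as in~\cite{shor2004equivalence}, for which $\chi_\r(\mc{N})>\chi(\mc{N})$. By the achievability part of the HSW theorem we also have $\cC_\r(\mc{N})\ge\chi_\r(\mc{N})$. Hence it only remains to show $\cC(\mc{N})=\chi(\mc{N})$, and for this it suffices that $\mc{N}$ be entanglement-breaking.

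Second, I would check the latter. Writing $\mc{M}$ in measure-and-prepare form $\mc{M}(\cdot)=\sum_m\tr(F_m\,\cdot)\,\xi_m$ for a POVM $\{F_m\}$ and states $\{\xi_m\}$, and recalling that in $\mc{N}$ the register $D$ is dephased and controls generalised Pauli unitaries $U^{jk}$ on the output of $\mc{M}$, one finds
\be
\mc{N}(\r_{\tilde A}\te\s_D)=\sum_{j,k}\bra{jk}\s_D\ket{jk}\,U^{jk}\mc{M}(\r_{\tilde A})U^{jk\dagger}=\sum_{j,k,m}\tr\!\big((F_m\te\pro{jk})(\r_{\tilde A}\te\s_D)\big)\,U^{jk}\xi_m U^{jk\dagger},
\ee
which is again measure-and-prepare, with POVM $\{F_m\te\pro{jk}\}$ and output states $\{U^{jk}\xi_m U^{jk\dagger}\}$; hence $\mc{N}$ is entanglement-breaking. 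Applying~\cite{shor2002additivity} gives $\cC(\mc{N})=\chi(\mc{N})$, and chaining the inequalities yields $\cC_\r(\mc{N})\ge\chi_\r(\mc{N})>\chi(\mc{N})=\cC(\mc{N})$, which is the claim.

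The only non-routine ingredient is the entanglement-breaking check, and even that is not really an obstacle once $\mc{M}$ is put in measure-and-prepare form: dephasing the control $D$ and applying a classically conditioned unitary after a measure-and-prepare channel keeps it measure-and-prepare. Everything else is a direct assembly of Theorem~\ref{lemma:CondHolevo}, the HSW achievability bound, and Shor's additivity theorem, all already available in the excerpt.
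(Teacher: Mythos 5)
Your proposal is correct and follows essentially the same route as the paper: invoke Theorem~\ref{lemma:CondHolevo}, note that the Shor construction inherits the entanglement-breaking property of $\mc{M}$, and apply the additivity result of \cite{shor2002additivity} together with $\cC_\r(\mc{N})\ge\chi_\r(\mc{N})$. The only difference is that you spell out the measure-and-prepare form of $\mc{N}$ explicitly, whereas the paper merely asserts this inheritance.
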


The minimum output entropy of a channel is, in general, difficult to compute. In fact it has been shown to be NP-hard \cite{beigi2007complexity}. There exist, however, channels for which it is easy to compute. One example is the depolarising channel \cite{king2003capacity}
\be\label{eq:depol}
\mc{M}_D(\m)=t\m+\frac{(1-t)}{d}\1,
\ee
where $-\frac{1}{d^2-1}\le t\le1$. It is easy to see that for pure input states, the output spectrum, hence the output entropy, only depends on the parameter $t$, not on the input state. Namely $\lambda(\mc{M}_D(\pro{\phi}))=\left(t+\frac{1-t}{d},\frac{1-t}{d},...,\frac{1-t}{d}\right)$ for all $\ket{\phi}$, hence $S_{\min}(\mc{M}_D)=H\left(t+\frac{1-t}{d},\frac{1-t}{d},...,\frac{1-t}{d}\right)$. However, depolarising channels cannot be used to witness bound entanglement, as the following result shows.

\begin{lemma}\label{lemma:specCond}
For channels $\mc{M}:\tilde A\to B$ such that $\lambda(\omega_{B\tilde B})$ only depends on the marginal spectra $\lambda(\rho_{\tilde A\tilde B})$, $\lambda(\rho_{\tilde B})$ or $\lambda(\rho_{\tilde A})$ of $\rho_{\tilde A\tilde B}$, no bound entangled $\rho_{\tilde A\tilde B}$ can fulfil eq. (\ref{eq:Cond}).
\end{lemma}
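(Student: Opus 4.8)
The plan is to derive a contradiction with (\ref{eq:Cond}): assuming $\rho_{\tilde A\tilde B}$ is PPT (hence, in the entangled case, bound entangled), I will show $S(B|\tilde B)_\omega\ge S_\text{min}(\mc M)$ under each of the three hypotheses on $\mc M$. Everything rests on three facts. First, since $\mc M$ is trace preserving, $\omega_{\tilde B}=\tr_B(\mc M\te\id)(\rho_{\tilde A\tilde B})=\rho_{\tilde B}$, so $S(B|\tilde B)_\omega=S(\omega_{B\tilde B})-S(\rho_{\tilde B})$; in particular $S(B|\tilde B)_\omega$ is determined by the pair of spectra $\lambda(\omega_{B\tilde B})$ and $\lambda(\rho_{\tilde B})$. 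Second, a PPT state has non-negative conditional entropy in both directions, $S(\rho_{\tilde B})\le S(\rho_{\tilde A\tilde B})=H(\lambda(\rho_{\tilde A\tilde B}))$ and $S(\rho_{\tilde A})\le S(\rho_{\tilde A\tilde B})$; this follows from the hashing bound $I(\tilde A\rangle\tilde B)_\rho\le D_\rightarrow(\rho)$ together with $D_\rightarrow(\rho)\le\log\|\rho^{T_{\tilde B}}\|_1=0$. Third, Lemma \ref{lemma:SSA}: for any separable $\sigma_{\tilde A\tilde B}$ one has $S(B|\tilde B)_{(\mc M\te\id)(\sigma)}\ge S_\text{min}(\mc M)$, so it suffices to exhibit a separable $\sigma$ with $S(B|\tilde B)_{(\mc M\te\id)(\sigma)}\le S(B|\tilde B)_\omega$.

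If $\lambda(\omega_{B\tilde B})$ depends only on $\lambda(\rho_{\tilde B})$, take the product (hence separable) state $\sigma=\pro{\phi}_{\tilde A}\te\rho_{\tilde B}$ for an arbitrary $\ket{\phi}$; since $\sigma_{\tilde B}=\rho_{\tilde B}$, the hypothesis gives $\lambda((\mc M\te\id)(\sigma)_{B\tilde B})=\lambda(\omega_{B\tilde B})$, whence $S(B|\tilde B)_{(\mc M\te\id)(\sigma)}=S(B|\tilde B)_\omega$ and Lemma \ref{lemma:SSA} closes the case. If $\lambda(\omega_{B\tilde B})$ depends only on $\lambda(\rho_{\tilde A})$, take $\sigma=\rho_{\tilde A}\te\frac{\1_{\tilde B}}{|\tilde B|}$; then $\lambda((\mc M\te\id)(\sigma)_{B\tilde B})=\lambda(\omega_{B\tilde B})$ and $S(\sigma_{\tilde B})=\log|\tilde B|\ge S(\rho_{\tilde B})$, so $S(B|\tilde B)_\omega=S(\omega_{B\tilde B})-S(\rho_{\tilde B})\ge S(\omega_{B\tilde B})-\log|\tilde B|=S(B|\tilde B)_{(\mc M\te\id)(\sigma)}\ge S_\text{min}(\mc M)$. (Neither of these two cases uses PPT: they hold for every $\rho_{\tilde A\tilde B}$.)

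The remaining case, $\lambda(\omega_{B\tilde B})$ depending only on $\lambda(\rho_{\tilde A\tilde B})$, is where bound entanglement is needed. Here I would look for a separable $\sigma_{\tilde A\tilde B}$ with $\lambda(\sigma_{\tilde A\tilde B})=\lambda(\rho_{\tilde A\tilde B})$ and $S(\sigma_{\tilde B})\ge S(\rho_{\tilde B})$: the hypothesis then yields $S((\mc M\te\id)(\sigma)_{B\tilde B})=S(\omega_{B\tilde B})$, and Lemma \ref{lemma:SSA} gives $S(B|\tilde B)_\omega=S(\omega_{B\tilde B})-S(\rho_{\tilde B})\ge S((\mc M\te\id)(\sigma)_{B\tilde B})-S(\sigma_{\tilde B})=S(B|\tilde B)_{(\mc M\te\id)(\sigma)}\ge S_\text{min}(\mc M)$. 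The PPT property enters only through the bound $S(\rho_{\tilde B})\le H(\lambda(\rho_{\tilde A\tilde B}))$ (and $S(\rho_{\tilde B})\le\log|\tilde B|$), which says that the $\tilde B$-marginal of a PPT state of a given spectrum cannot be too mixed. A natural candidate for $\sigma$ is the manifestly separable, classically correlated state $\sum_j\big(\sum_i\mu_{ij}\pro{i}_{\tilde A}\big)\te\pro{j}_{\tilde B}$, where $\{\mu_{ij}\}$ is a rearrangement of the eigenvalues of $\rho_{\tilde A\tilde B}$ into a $|\tilde A|\times|\tilde B|$ array; using $S_\text{min}(\mc M)\le S(\mc M(\xi))$ for every state $\xi$ (by concavity of the von Neumann entropy) one gets $S((\mc M\te\id)(\sigma)_{B\tilde B})\ge H(c)+S_\text{min}(\mc M)$ with $c_j=\sum_i\mu_{ij}$, so it would be enough to rearrange the eigenvalues so that the column sums $c$ satisfy $H(c)\ge S(\rho_{\tilde B})$.

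I expect this last point to be the main obstacle. Making the column sums literally uniform is not always possible, so one must argue that the entropy of the column sums can always be made at least $\min\{H(\lambda(\rho_{\tilde A\tilde B})),\log|\tilde B|\}$ or---more robustly---allow non-diagonal separable comparison states and show that the maximal $\tilde B$-marginal entropy over separable states of a fixed spectrum equals that over PPT states of the same spectrum; the PPT inequality $S(\rho_{\tilde B})\le S(\rho_{\tilde A\tilde B})$ is the a priori bound that makes this plausible. As a sanity check, for a unitary channel $\mc M(\cdot)=U(\cdot)U^\dagger$ the third case has $S_\text{min}(\mc M)=0$ and $\omega_{B\tilde B}=(U\te\1)\rho_{\tilde A\tilde B}(U\te\1)^\dagger$, so the claim reduces exactly to $S(\tilde A|\tilde B)_\rho\ge0$ for PPT $\rho$; the general third case is thus a genuine strengthening of this fact and carries the real content of the lemma, while the first two cases are immediate.
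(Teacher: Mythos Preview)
Your reduction is sound, and the first two cases are handled cleanly (indeed without any PPT hypothesis). The third case, however, is left as an admitted ``main obstacle,'' and that is precisely where the content of the lemma lives. Your proposed rearrangement into a classically correlated array is on the right track, but you have not shown that one can always make the column-sum entropy $H(c)$ at least $S(\rho_{\tilde B})$; the entropic inequality $S(\rho_{\tilde B})\le S(\rho_{\tilde A\tilde B})$ you derive from the hashing bound is too weak to force this combinatorial statement by itself.

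The paper closes this gap in one stroke by invoking two known results. First, Hiroshima's majorization criterion for undistillable (in particular PPT) states gives the \emph{spectral} inequalities $\lambda(\rho_{\tilde B})\succ\lambda(\rho_{\tilde A\tilde B})$ and $\lambda(\rho_{\tilde A})\succ\lambda(\rho_{\tilde A\tilde B})$, which is strictly stronger than the entropic versions you use. Second, Nielsen and Kempe's Theorem~2 says that whenever these two majorization relations hold, there exists a separable state $\tilde\sigma_{\tilde A\tilde B}$ with \emph{exactly} the same global spectrum and \emph{both} marginal spectra as $\rho_{\tilde A\tilde B}$. This single $\tilde\sigma$ simultaneously handles all three hypotheses on $\mc M$: whichever of $\lambda(\rho_{\tilde A\tilde B})$, $\lambda(\rho_{\tilde A})$, $\lambda(\rho_{\tilde B})$ determines $\lambda(\omega_{B\tilde B})$, it is matched by $\tilde\sigma$, so $\lambda((\mc M\otimes\id)(\tilde\sigma))=\lambda(\omega_{B\tilde B})$; and since $\lambda(\tilde\sigma_{\tilde B})=\lambda(\rho_{\tilde B})$ one gets equality $S(B|\tilde B)_\omega=S(B|\tilde B)_{(\mc M\otimes\id)(\tilde\sigma)}$, after which Lemma~\ref{lemma:SSA} finishes. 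In particular, your desired inequality $S(\sigma_{\tilde B})\ge S(\rho_{\tilde B})$ in case three is obtained with equality, and your classically-correlated ansatz is essentially how Nielsen--Kempe construct $\tilde\sigma$---but the existence proof is a nontrivial combinatorial argument that you would need to either cite or reproduce.
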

\begin{proof}
Let $\rho_{\tilde A\tilde B}$ be bound entangled. By \cite{hiroshima2003majorization}, this implies that $\lambda(\rho_{\tilde B})\succ\lambda(\rho_{\tilde A\tilde B})$ and $\lambda(\rho_{\tilde A})\succ\lambda(\rho_{\tilde A\tilde B})$. Hence, by Theorem 2 of \cite{nielsen2001separable}, there exists a separable state $\tilde{\sigma}_{\tilde A\tilde B}$ such that $\lambda(\tilde{\sigma}_{\tilde A\tilde B})=\lambda(\rho_{\tilde A\tilde B})$, $\lambda(\tilde{\sigma}_{\tilde B})=\lambda(\rho_{\tilde B})$ and $\lambda(\tilde{\sigma}_{\tilde A})=\lambda(\rho_{\tilde A})$. Let us define $\tilde{\s}_{B\tilde B}=(M\otimes\id)\tilde{\sigma}_{\tilde A\tilde B}$. By assumption, it holds $\lambda(\omega_{B\tilde B})=\lambda(\tilde{\s}_{B\tilde B})$, hence 
\be
S(B|\tilde B)_\omega=S(B|\tilde B)_{\tilde{\s}}\ge\min_{\sigma_{\tilde A\tilde B}\in\text{SEP}}S(B|\tilde B)_{M\te\id(\sigma)}=S_\text{min}(\mc{M}),
\ee
where the last equality is due to Lemma \ref{lemma:SSA}.
\end{proof}

It is possible to overcome the limitations arising from Lemma \ref{lemma:specCond}, by adding a transposition to the depolarising channel. Namely, we consider the \emph{transpose depolarising channel} \cite{fannes2004additivity}
\be\label{eq:depolT}
\mc{M}_D^T(\m)=t\m^T+\frac{(1-t)}{d}\1,
\ee
for $-\frac{2}{d^2-2}\le t\le\frac{1}{d+1}$. For $-\frac{1}{d^2-1}\le t\le\frac{1}{d+1}$, (\ref{eq:depolT}) is entanglement-breaking. As transposition of a pure input state does not change the spectrum, the transpose depolarising channel has the same minimum output entropy as the depolarising channel. In contrast to the depolarising channel, the global spectrum of the output states does not only depend on the marginal spectra of the inputs, hence the transpose depolarising channel can be used to activate bound entanglement in the sense of Corollary \ref{lemma:Cond}.

Another example where $S_{\min}$ is known is the two-Pauli channel \cite{bennett1997entanglement} acting on a two qubit system 
\be\label{eq:2Pauli}
M_{XZ}(\m)=t\m+\frac{1-t}{2}\s_x\m\s_x+\frac{1-t}{2}\s_z\m\s_z.
\ee
Using the Bloch sphere representation of $\rho$, it is straightforward to show that for $t=1/3$ $\lambda(M_{XY}(\rho))=(\frac{2}{3},\frac{1}{3})$ for all input states, hence $S_\text{min}(M_{XZ})=H(\frac{2}{3},\frac{1}{3})$. 
Other examples of channels with known minimum output entropy are given in \cite{wolf2005classical}. 

\subsection*{Numerical examples}
As proof of principle examples, we have numerically checked the condition (\ref{eq:Cond}) for the two-qubit Werner state (\ref{eq:Werner}) and a number of channels, for which the minimum output entropy is known. The two-qubit Werner state is not one-way distillable for $q\ge1/4$. In this parameter region, namely for $1/4\le q\le0.345$, we could observe a fulfillment of condition (\ref{eq:Cond}) of Corollary \ref{lemma:Cond} for the depolarising channel (\ref{eq:depol}) with channel parameter $t=-1/3$, the transpose depolarising channel (\ref{eq:depolT}) with $t=1/3$ as well as the two-Pauli channel (\ref{eq:2Pauli}) with $t=1/3$, showing that there exists a state that is not one-way distillable but can still provide an advantage in classical communication via the channels arising by using those channels in the Shor construction. In figure \ref{fig:plots} we have plotted the difference $\Delta S:=S(B|\tilde B)_\omega-S_{\min}$ versus the Werner state parameter $q$, for the depolarizing channel with $t=-1/3$. Where $\Delta S<0$, the condition (\ref{eq:Cond}) is fulfilled. 

\begin{figure}
	\centering
	\includegraphics[trim=1cm 5cm 2cm 5cm,clip=true,width=0.5\textwidth]{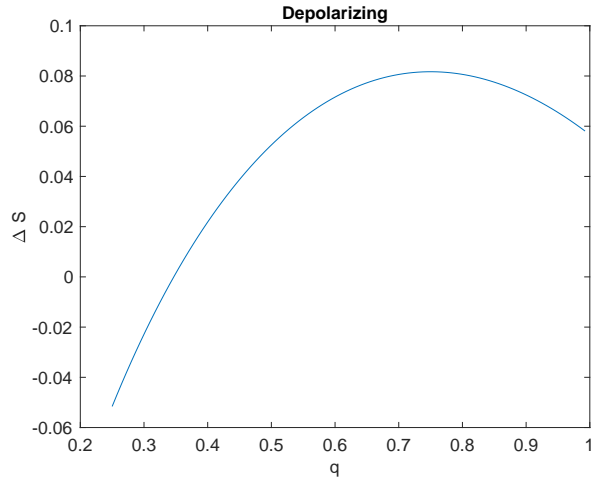}
	\caption{$\Delta S$ plotted versus state parameter $q$ for the two-qubit Werner state (\ref{eq:Werner}) and the depolarising channel  (\ref{eq:depol}) with $t=-1/3$. Where $\Delta S <0$, Theorem \ref{lemma:CondHolevo} can be applied.  The transpose depolarising channel (\ref{eq:depolT}) and the two-Pauli channel (\ref{eq:2Pauli}) with $t=1/3$ provide the same curves.}\label{fig:plots}
	\end{figure}

\if0

Non-trivial results have been obtained for the two-qubit Werner state (\ref{eq:Werner}), using the transpose depolarising channel (\ref{eq:depolT}). The gap between $S_{\min}$ and $S(B|\tilde B)_\omega$ is plotted versus state and channel parameters in Figure \ref{plot:Werner3D}. For $t=1/3$, the conditions for Corollary \ref{lemma:Cond} are met for $q\le0.345$, i.e. in part of the non one-way distillable region $q\ge0.25$. 

\begin{figure}
	\centering
	\includegraphics[width=0.8\textwidth]{Werner_contour.png}
	
	\caption{$S(B|\tilde B)_\omega-S_{\min}$ plotted versus state and channel parameters $q$ and $t$ for Werner state and transpose depolarising channel. Where $S(B|\tilde B)_\omega-S_{\min}<0$, Theorem \ref{lemma:CondHolevo} can be applied.}\label{plot:Werner3D}
	\end{figure}
\fi
\section{Summary and Outlook}
We have studied the role noisy entanglement with nonpositive coherent information can play in classical communication via a noisy quantum channel. We have provided a memory channel where any entangled state can provide an advantage if we allow for feedback from Bob and Alice and restrict to product encodings. The advantage is obtained by employing the fact that any entangled state can provide an advantage in channel discrimination. In future work we plan to investigate different communication scenarios where the gap between the assisted and unassisted case can be increased. In a second approach we have constructed a channel, the Holevo capacity of which can be increased whenever an entropic condition is met. Namely, if we are given a channel $\mc{M}$ and an entangled state $\r$, we let random unitaries act on the output of $\mc{M}$, such that the combined channel $\mc{N}$, with output system $B$, has Holevo capacity $\chi(\mc{N})=\log |B|-S_\text{min}(\mc{M})$, where $S_{min}$ denotes the minimum output entropy. The $\r$-assisted Holevo capacity, on the other hand, is given by $\chi_\r(\mc{N})=\log |B|-S(B|\tilde B)_{\hat{\s}}$, where $\omega_{B\tilde B}=\id_{\tilde B}\te \mc{M}^{\tilde A\to B}(\r_{\tilde A\tilde B})$. Comparison of the two provides us with an entropic criterion that is sufficient for an increase in the Holevo capacity and, if $\mc{M}$ is entanglement-breaking, the classical capacity by using $\r$ as an additional resource. As separable states cannot provide an advantage in classical communication this criterion can also serve as an entanglement witness. Using this criterion and the (transpose) depolarising channel as well as the two Pauli channel as choices for $\mc{M}$, we have shown that Werner states that are not one-way distillable, hence cannot be used in dense coding with a noiseless channel, can provide an advantage in classical communication. We plan to further exploit this criterion and hope to find more examples of states and channels showing nontrivial results. One way to achieve this could be to minimise $\Delta S$ over (entanglement-breaking) channels for given states, which might be challenging as the objective already contains a nontrivial optimisation problem.              

\section*{Acknowledgements}
The authors are indebted to many colleagues for interesting discussions on entanglement-assistance over the years. Among others, these include Marcus Huber, Bill Munro, Go Kato, Koji Azuma, Kae Nemoto, Fabian Furrer, Yanbao Zhang, Elton Yechao Zhu and Paul Skrzypczyk. The authors acknowledge support by the European Commission (STREP ``RAQUEL''), the ERC (Advanced Grant ``IRQUAT''), the Spanish MINECO (grants FIS2008-01236, FIS2013-40627-P
and FIS2016-86681-P), with the support of FEDER funds, and by the Generalitat de Catalunya, CIRIT project 2014-SGR-966, the NFR Project ES564777, by the NSFC (grant nos. 11375165, 11875244) as well as by the Netherlands Organization for Scientific Research (NWO/OCW), as part of the Quantum Software Consortium programme (project number 024.003.037 / 3368) and an NWO Vidi grant.

\if0
\begin{appendix}
\section{Base case for proof by induction of equation (\ref{InductionHypothesis2})}\label{app:base}
Let us first assume that $n=2$. It holds
\begin{align}
\chi^{(2)}(T^d)&=I(X:B'_{1}B_{2})_{\s^{(2)}}\\
&=I(X:B'_{1})_{\s^{(2)}}+I(X:B_{2}|B'_{n1})_{\s^{(2)}}\\
&\le\chi^{(1)}(T^d)+I(X:B_{2}|B'_{1})_{\s^{(2)}}\label{eq:562},
\end{align}
where we have made use of the data processing inequality. $\chi^{(1)}$ can be be bounded as
\be
\chi^{(1)}(T^d)=I(X:B_{1})_{\s^{(1)}}\le\cO(\log \tilde{d}).
\ee
As for the second term in (\ref{eq:562}), let us note that the cq-state $\s^{(2)}_{XB'_{1}B_{2}}$, defined by (\ref{cq2})-(\ref{cq2end}), is separable w.r.t. the partition $XB'_{1}:B_{2}$. This allows us to perform the following strategy \cite{bowen2005feedback}.
\begin{align}
I(X:B_{2}|B'_{1})_{\s^{(2)}}&=S(B_{2}|B'_{1})_{\s^{(2)}}-S(B_{2}|XB'_{1})_{\s^{(2)}}\\
&\le \log d-\frac{1}{2}\sum_{i=0}^1\sum_{jx}p_xp_{j|ix}S(B_{2}|XB'_{1})_{\pro{x}\te\tilde{\s}^{(2,ijx)}\te\hat{\s}^{(2,ijx)}}\\
&=\log d-\frac{1}{2}\sum_{i=0}^1\sum_{jx}p_xp_{j|ix}S(B_{2})_{\hat{\s}^{(2,ijx)}}\\
&=\log d-\sum_{jx}p_{jx}\sum_{i=0}^1p_{i|jx}S(B_{2})_{\hat{\s}^{(2,ijx)}},\label{eq:46_2}
\end{align}

where, in the first inequality, we have made use of the concavity of the conditional entropy \cite{lieb2002proof}. Let us now consider a particular outcome $j$  and a particular message $x$. \emph{Conditioned} on $j$ and $x$, the probability that channel $m_{0,1}$ has been used can be expressed by 
\begin{align}
p_{0|jx}&=\frac{1}{2}+\frac{\e_{j|x}}{4p_{j|x}},\\
p_{1|jx}&=\frac{1}{2}-\frac{\e_{j|x}}{4p_{j|x}},
\end{align}
where 
\be
\e_{j|x}:=p_{j|0x}-p_{j|1x}=\tr\left[\cE^{(1)}_j\left(\left[m_0-m_1\right]\left(\omega^x_{A_1}\right)\right)\right].
\ee

Going back to (\ref{eq:46_2}), let us recall the definition of $\hat{\s}^{(2,ijx)}$:
\be
\hat{\s}^{(2,ijx)}_{B_{2}}=n_i^d\left(\Lambda^{(1)}_{jx}\left(\omega^x_{A_{2}}\right)\right),
\ee
where the $n_i^d$ are measurements in two MUBs. Note that, even if the probability of obtaining a particular outcome $j$ can depend on $i$, the state $\Lambda^{(1)}_{jx}\left(\omega^x_{A_{2}}\right)$ itself, \emph{conditioned} on a particular outcome $j$, does not depend on $i$. Hence, we can apply an entropic uncertainty relation \cite{maassen1988generalized} in order to show that for every $j$ and $x$ it holds

\begin{align}
&\sum_{i=0}^1p_{i|jx}S\left(B_{2}\right)_{\hat{\s}^{(2,ijx)}}\\
&=\left(\frac{1}{2}+\frac{\e_{j|x}}{4p_{j|x}}\right) S\left(B_{2}\right)_{\hat{\s}^{(2,0jx)}}+\left(\frac{1}{2}-\frac{\e_{j|x}}{4p_{j|x}}\right)S\left(B_{2}\right)_{\hat{\s}^{(2,1jx)}}\\
&\ge \left(\frac{1}{2}-\frac{\left|\e_{j|x}\right|}{4p_{j|x}}\right) \log d.
\end{align}

Hence, we obtain
\be
I(X:B_{2}|B'_{1})_{\s^{(2)}}\le\left(\frac{1}{2}+\frac{1}{4}\sum_{jx}p_{jx}\frac{\left|\e_{j|x}\right|}{p_{j|x}}\right)\log d.
\ee
It holds further
\be
\sum_{jx}p_{jx}\frac{\left|\e_{j|x}\right|}{p_{j|x}}=\sum_{jx}p_{x}\left|\e_{j|x}\right|=\sum_{x}p_{x}\left[\sum_{j\in \cJ^{x}_0}\e_{j|x}-\sum_{j\in \cJ^{x}_1}\e_{j|x}\right],\label{eq542}
\ee
where, given $x$, we have defined $\cJ^{x}_{0}=\left\{j:\e_{j|x}\ge0\right\}$ and $\cJ^{x}_{1}=\left\{j:\e_{j|x}<0\right\}$. Since $\sum_{j\in \cJ^{x}_0}\cE_{j}^{(1)}$ and $\sum_{j\in \cJ^{x}_1}\cE_{j}^{(1)}$ are sub-channels, they can be represented by Kraus operators as $\sum_{j\in \cJ^{x}_l}\cE_{j}^{(1)}(\cdot)=\sum_k F^{lx}_k(\cdot)F^{lx\dagger}_{k}$ where $F^{lx}:=\sum_kF^{lx\dagger}_kF^{lx}_{k}\le \1$ and $l=0,1$. Hence

\begin{align}
\sum_{j\in \cJ^{x}_0}\e_{j|x}&=\tr\left[F^{0x}\left(\left[m_0-m_1\right]\left(\omega^x_{A_1}\right)\right)\right]\\
&\le\max_{\s_{A}} \max_{0\le\Lambda\le\1}\tr\left[\Lambda\left(m_0(\s_{A})-m_1(\s_{A})\right)\right]\\
&=\frac{1}{2}\left\|m_0-m_1\right\|_1=\e,
\end{align}
Up to a sign change the same holds true for the second term in (\ref{eq542}), hence
\be
I(X:B_{2}|B'_{1})\le\frac{1+\e}{2}\log d
\ee
Putting it all together, we obtain
\be
\chi^{(2)}(T^d)\le\frac{1+\e}{2}\log d+\cO\left(\log\tilde{d}\right).
\ee
showing (\ref{InductionHypothesis2}) for $n=2$.

\end{appendix}
\fi
\bibliographystyle{unsrt}
\bibliography{EntAsCap.bib}
\end{document}